\newcounter{todocounter}
\theoremstyle{plain}
\newtheorem{theorem}{Theorem}[section]
\newtheorem{lem}[theorem]{Lemma} 
\newtheorem{proposition}[theorem]{Proposition}
\newtheorem{claim}[theorem]{Claim}
\newtheorem{lemma}[theorem]{Lemma}
\newtheorem{corollary}[theorem]{Corollary}
\newtheorem{observation}[theorem]{Observation}
\newtheorem{question}[theorem]{Question}
\newtheorem{rem}[theorem]{Remark}
\theoremstyle{definition}
\newtheorem{defnn}[theorem]{Definition}
\newcommand{\sse}{\subseteq}
\newcommand{\bits}{\{-1,1\}}
\newcommand{\zon}{\{0,1\}^n}
\newcommand{\defn}{\stackrel{\text{\tiny def}}{=}}
\newcommand{\zo}{\bits}
\newcommand{\pmo}{\set{-1,1}}
\newcommand{\sps}{\mathsf{sparsity}}
\newcommand{\set}[1]{\left\{ #1 \right\}}
\newcommand{\etal}{\textit{et al}.\@\xspace}
\newcommand{\ie}{i.e.}
\newcommand{\DT}{\mathsf{DT}}
\newcommand{\rand}{\mathsf{R}}
\newcommand{\sens}{\mathsf{s}}
\newcommand{\bsens}{\mathsf{bs}}
\newcommand{\cert}{\mathsf{C}}
\newcommand{\alt}{\mathsf{alt}}
\newcommand{\salt}{\mathsf{salt}}
\newcommand{\scalt}{\mathsf{scalt}}
\renewcommand{\deg}{\mathsf{deg}}
\newcommand{\calC}{{\cal C}}
\newcommand{\calF}{{\cal F}}
\newcommand{\INP}{\mathsf{IP}}
\newcommand{\GIP}{\mathsf{GIP}}
\newcommand{\maj}{\mathsf{Maj}}
\newcommand{\smaj}{\mathsf{ShiftMaj}}
\newcommand{\F}{{\mathbb{F}}}
\newcommand{\N}{{\mathbb{N}}}
\newcommand{\degtwo}{\mathsf{deg_{2}}}
\renewcommand{\deg}{\mathsf{deg}}
\title{Sensitivity, Affine Transforms and \\ Quantum Communication Complexity}
\author{Krishnamoorthy Dinesh\thanks{Indian Institute of Technology Madras, Chennai, India. {\small \texttt{\{kdinesh,jayalal\}@cse.iitm.ac.in}}} \and  Jayalal Sarma$^*$}
\date{}
\newtheorem*{rep@theorem}{\rep@title} \newcommand{\newreptheorem}[2]{
\newenvironment{rep#1}[1]{
	\def\rep@title{\textsf{\textbf{#2}} \ref{##1}}
	\begin{rep@theorem} }
	{\end{rep@theorem} } }
\begin{document}

\maketitle

\begin{abstract}
In this paper, we study the Boolean function parameters sensitivity ($\sens$), block sensitivity ($\bsens$), and alternation ($\alt$) 
under specially designed affine transforms and show several applications. For a 
function $f:\F_2^n \to \zo$, and $A = Mx+b$ for $M \in \F_2^{n \times n}$ and $b \in 
\F_2^n$, the result of the transformation $g$ is defined as $\forall x \in \F_2^n, g(x) = 
f(Mx+b)$. 

As a warm up, we study alternation under linear shifts (when $M$ is restricted to be the identity matrix) called the \emph{shift invariant alternation} (the smallest alternation that can be achieved for the Boolean function $f$ by shifts, denoted by $\salt(f)$). By a result of Lin and Zhang [ICALP 2017], it follows that $\bsens(f) \le O(\salt(f)^2\sens(f))$. Thus, to settle the Sensitivity Conjecture ($\forall~f, \bsens(f) \le \poly(\sens(f))$), it suffices to argue that $\forall~f, \salt(f) \le \poly(\sens(f))$. However, we exhibit an explicit family of Boolean functions for which $\salt(f)$ is $2^{\Omega(\sens(f))}$.

Going further, we use an affine transform $A$, such that the corresponding 
function $g$ satisfies $\bsens(f,0^n) \le \sens(g)$. We apply this 
in the setting of quantum communication complexity 
to  prove that for $F(x,y) \defn f(x \land y)$, the bounded error quantum communication complexity of $F$ with prior 
entanglement, $Q^*_{1/3}(F)$ is $\Omega(\sqrt{\bsens(f,0^n)})$. 
Our proof builds on ideas from Sherstov [Quantum Information and Computation, 10:435–455, 2010]
where we use specific properties of the above affine transformation.
Using this, we show the following.
\begin{enumerate}[(a)]

\item For a fixed prime $p$ and an $\epsilon$, $0 < \epsilon < 1$, any Boolean function $f$ that depends on all its inputs with $\deg_p(f) \le (1-\epsilon)\log n$ must satisfy $Q^*_{1/3}(F) = \Omega\left (\frac{n^{\epsilon/2}}{\log n} \right )$. Here, $\deg_p(f)$ denotes the degree of the multilinear polynomial over $\F_p$ which agrees with $f$ on Boolean inputs.

\item For Boolean function $f$ such that there exists primes $p$ and $q$ with  $\deg_q(f) \ge \Omega(\deg_p(f)^\delta)$ for $\delta > 2$,  
the deterministic communication complexity - $\D(F)$ and $Q^*_{1/3}(F)$ are polynomially related. In particular, this holds when $\deg_p(f) = O(1)$. 
Thus, for this class of functions, this answers an open question (see Buhrman and de Wolf [CCC 2001]) about the relation between the two measures.
\end{enumerate}

\vspace{-1mm}
\noindent Restricting back to the linear setting, we construct linear transformation $A$, such that 
the corresponding function $g$ satisfies, $\alt(f) \le 2\sens(g)+1$. Using this new 
relation, we exhibit Boolean functions $f$ (other than the parity function) such that $
\sens(f)$ is $\Omega(\sqrt{\sps(f)})$ where $\sps(f)$ is the number of non-zero coefficients in the Fourier representation of $f$. This family of Boolean functions also rule out a potential approach to settle the XOR Log-Rank conjecture via the recently settled Sensitivity conjecture [Hao Huang, Annals of Mathematics, 190(3): 949-955, 2019].

\end{abstract}
\section{Introduction}
For a Boolean function $f:\zon \to \zo$, \emph{sensitivity} of $f$ on $x \in \zon$, is 
the maximum number of indices $i \in [n]$, such that $f(x \oplus e_i) \ne f(x)$ where 
$e_i \in \zon$ with exactly the $i^{th}$ bit as $1$. The \textit{sensitivity} of $f$ (denoted by $
\sens(f)$) is the maximum sensitivity of $f$ over all inputs. A related parameter is the 
\emph{block sensitivity} of $f$ (denoted by $\bsens(f)$), where we allow disjoint blocks 
of indices to be flipped instead of a single bit. Another parameter is the deterministic \emph{decision tree complexity} (denoted by $\DT(f)$) which is the depth of an optimal decision tree computing the function $f$. The \emph{certificate complexity} of $f$ (denoted by $\cert(f)$) is the non-deterministic variant of the decision tree complexity. The parameter $\sens(f)$ was originally studied by Cook \etal~\cite{CDR86} in connection with the {\sc CREW-PRAM} model of computation. Subsequently, Nisan and Szegedy~\cite{NS94} (see also~\cite{N91}) introduced the parameters $\bsens(f)$ and $\cert(f)$ and conjectured that for any function $f:\zon \to \zo$, $\bsens(f) \le \poly(\sens(f))$ - known as the Sensitivity Conjecture. Later 
developments, which revealed several connections between sensitivity, block sensitivity 
and the other Boolean function parameters, demonstrated the fundamental nature of the 
conjecture (see \cite{HKP11} for a survey and several equivalent formulations of the 
conjecture). This conjecture has recently been resolved in~\cite{H19} by showing the following which implies that $\bsens(f) = O(\sens(f)^4)$.
\begin{theorem}[Sensitivity Theorem~\cite{H19}] \label{sens:thm}
	For every Boolean function $f$, $\deg(f) \le \sens(f)^2$. 
\end{theorem}

Shi 
and Zhang~\cite{ZS10} studied the parity complexity variants of $\bsens(f), \cert(f)$ and 
$\DT(f)$ and observed that such variants have the property that they 
are invariant under arbitrary invertible linear transforms (over $\F_2^n$). They also showed existence of Boolean functions where under 
\emph{all} invertible linear transforms of the function, the decision tree depth is linear while their parity variant of decision tree complexity is at most logarithmic in the input length.

\noindent
\textbf{Our Results :} 
While the existing studies focus on understanding the Boolean function parameters under 
the effect of arbitrary invertible affine transforms, 
in this work, we study the relationship between the above parameters of Boolean functions 
$f : \F_2^n \to \zo$, under specific affine transformations over $\F_2^n$. More 
precisely, we explore the relationship of the above parameters for the function $g : 
\F_2^n \to \zo$ and $f$, where $g$ is defined as $g(x) = f(Mx+b)$ for specific $M \in \F_2^{n 
\times n}$ and $b \in \F_2^n$ (where is $M$ not necessarily invertible). We show the following results, and their corresponding 
applications, which we explain along with the context in which they are relevant.

\noindent
\textbf{Alternation under shifts :} We study the parameters when 
the transformation is very structured - namely the matrix $M$ is the identity matrix and 
$b \in \F_2^n$ is a linear shift. More precisely, we study $f_b(x) \defn f(x+b)$ where 
$b$ is the shift. 
Observe that all the parameters mentioned above are invariant under shifts. A Boolean function parameter which is neither shift invariant nor invariant 
under invertible linear transforms is the \emph{alternation}, a measure of 
non-monotonicity of Boolean function (see~\cref{sec:prelims} for a formal definition). To see 
this for the case of shifts, if we take $f$ as the majority function on $n$ bits, then there exists shifts $b \in \zon$ where $\alt(f_b) = \Omega(n)$ while $\alt(f) = 1$.

A result related to Sensitivity Conjecture by Lin and Zhang~\cite{LZ17} shows that 
$\bsens(f) \le O(\sens(f)\alt(f)^2)$.
This bound for $\bsens(f)$, implies that to settle the Sensitivity Conjecture, it suffices to show 
that $\alt(f)$ is upper bounded by $\poly(\sens(f))$ for all Boolean functions $f$. However, the authors~\cite{JDS18} ruled this out, by exhibiting a family of functions where $\alt(f)$ is at least 
$2^{\Omega(\sens(f))}$. 

Observing that the parameters $\sens(f), \bsens(f)$ are invariant under shifts, we 
define a new quantity \emph{shift-invariant alternation}, $\salt(f),$ which is the minimum alternation of any function $g$ 
obtained from $f$ upon shifting by a vector $b \in \zon$ (\cref{def:salt}). By the aforementioned bound on $\bsens(f)$ of~\cite{LZ17}, it is easy to observe that $\bsens(f) \le O(\sens(f)\salt(f)^2 )$. We also show that there exists a family of Boolean functions $f$ with $\bsens(f) = \Omega(\sens(f)\salt(f))$ (\cref{prop:tightness-salt}). 

It is conceivable that $\salt(f)$ is 
much smaller compared to $\alt(f)$ for a Boolean function $f$ and hence that $\salt(f)$ can potentially be upper 
bounded by $\poly(\sens(f))$ thereby settling the Sensitivity Conjecture. However, we rule this out by showing the following 
stronger gap, about the same family of functions demonstrated in \cite{JDS18} (see also \cite{GSW16}).
\begin{proposition}\label{salt:gap}
There exists an explicit family of Boolean functions for which $\salt(f)$ is $2^{\Omega(\sens(f))}$.
\end{proposition}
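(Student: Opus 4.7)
The plan is to reuse the explicit family of Boolean functions from \cite{DS18} that witnesses $\alt(f) = 2^{\Omega(\sens(f))}$ and upgrade the argument to its shift-invariant variant. Since sensitivity is invariant under XOR shifts --- that is, $\sens(f_b) = \sens(f)$ for every $b \in \zon$ --- it suffices to establish the lower bound $\alt(f_b) = 2^{\Omega(\sens(f))}$ for every $b \in \zon$, not just for $b = 0^n$.

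Recall that in \cite{DS18} the lower bound on $\alt(f)$ is obtained by exhibiting a specific monotone chain from $0^n$ to $1^n$ along which $f$ alternates $2^{\Omega(\sens(f))}$ times. To bound $\salt(f)$, I would need to show that for every shift $b$ there is a monotone chain from $0^n$ to $1^n$ along which $f_b$ alternates exponentially many times; equivalently, an antipodal chain from $b$ to $\overline{b}$ in the Boolean cube along which $f$ alternates many times. The central step is to argue that the structural property of the \cite{DS18} family that drives large alternation is robust under arbitrary XOR shifts. The cleanest route is to show that the family is closed under such shifts (up to isomorphism), so that $f_b$ satisfies the very hypotheses of the \cite{DS18} lower bound with the same sensitivity, yielding a shift-specific alternating chain immediately. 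A backup route is to exploit any recursive or symmetric structure in the construction to hand-build, for each $b$, a monotone chain in the shifted coordinates that traces enough alternating edges of $f$ to preserve the $2^{\Omega(\sens(f))}$ lower bound.

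The main obstacle is the uniformity over all $2^n$ shifts. A single witness chain of length $n$ used to lower-bound $\alt(f)$ can always be derailed by an adversarial $b$ that does not align with it, so one cannot merely reuse the original chain. If the family-closure approach goes through, this difficulty disappears, because the lower-bound argument of \cite{DS18} is invoked freshly for each $f_b$ rather than for a fixed chain. If instead one has to construct chains directly, the challenge will be to ensure that the combinatorial features producing alternations in \cite{DS18} --- which are what give the exponential dependence on $\sens(f)$ --- survive the coordinate re-labeling induced by $b$, and not just for most $b$ but for all of them simultaneously.
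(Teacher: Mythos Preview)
Your plan correctly targets the family $\calF=\{f_k\}$ of \cite{DS18} and correctly reduces the task to lower-bounding $\alt(f_k(x\oplus c))$ uniformly over all shifts $c$. The paper follows your \emph{backup} route---exploiting the recursive structure---rather than the closure route. In fact the closure route does not go through as stated: $\calF$ is not closed under shifts even up to isomorphism. Already for $k=1$ the shift gives $\overline{f_1}\notin\calF$, and for larger $k$ shifting a leaf-parent variable flips a $(0,1)$ leaf pair to $(1,0)$, violating the leaf-labeling convention in \cref{def:fun}; so one cannot simply re-invoke the \cite{DS18} bound on $f_b$.

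The paper's execution of the recursive route is also simpler than ``hand-building'' a chain for each $c$. It runs a direct induction on $k$: $f_{k+1}$ has a root variable $x_t$ and two variable-disjoint subfunctions $h_1,h_2\in\calF$, each on $2^k-1$ variables. For any shift $c=(c',b,c'')$, variable-disjointness yields
\[
\alt\bigl(f_{k+1}(x\oplus c)\bigr)\;\ge\;\alt\bigl(h_1(z'\oplus c')\bigr)+\alt\bigl(h_2(z''\oplus c'')\bigr)\;\ge\;2^{k-2}+2^{k-2}=2^{k-1},
\]
with the base case $k=2$ checked by hand. Since $\sens(f_k)\le k$, this gives $\salt(f_k)\ge 2^{k-2}=2^{\Omega(\sens(f_k))}$. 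The ``uniformity over all $2^n$ shifts'' obstacle you flag is thus dissolved not by reusing a fixed witness chain, but by the additivity of alternation across the two variable-disjoint halves, which makes the induction go through for every $c$ at once.
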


\noindent
\textbf{Block Sensitivity under Affine Transformations :}
We now generalize our theme of study to the affine transforms over $\F_2^n$. In 
particular, we explore how to design affine transformations in such a way that block sensitivity of the original function ($f$) is upper bounded by the sensitivity of the new function ($g$). We use $\bsens(f,a)$ to denote the number of sensitive blocks of $f$ on the input $a$.
\begin{lem}\label{lt:bs-s}
For any $f:\F_2^n \to \pmo$ and $a \in \zon$, there exists an affine transform $A:
\F_2^n 	\to \F_2^n$	such that for $g(x) = f(A(x))$, 
\begin{enumerate}[(a)]
\item $\bsens(f,a)\le \sens(g,0^n)$, and 
\item $g(x) = f((x_{i_1}, x_{i_2}, \ldots, x_{i_n})\oplus a)$ where $i_1, \ldots, 
i_n \in [n]$ are not necessarily distinct.
\end{enumerate}
\end{lem}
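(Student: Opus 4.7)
The plan is to turn a witness for $\bsens(f,a)$ directly into a witness for $\sens(g,0^n)$ by ``collapsing'' each sensitive block into a single coordinate of the input to $g$. Let $b := \bsens(f,a)$ and fix pairwise disjoint blocks $B_1,\ldots,B_b \subseteq [n]$ such that flipping the bits of $a$ on $B_m$ toggles the value of $f$. To satisfy part~(b) automatically, I would restrict attention to affine maps of the form $A(x) = (x_{i_1},\ldots,x_{i_n}) \oplus a$, so the job reduces to choosing the indices $i_1,\ldots,i_n \in [n]$ judiciously.

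The core construction, assuming for now that $b<n$, is to pick distinct ``trigger'' indices $k_1,\ldots,k_b \in [n]$ together with one spare ``sink'' index $k^\star \in [n]\setminus\{k_1,\ldots,k_b\}$, and then set $i_l := k_m$ whenever $l \in B_m$ and $i_l := k^\star$ whenever $l$ lies outside every block. With this choice, $A(0^n) = a$, so $g(0^n) = f(a)$. For each fixed $m$, evaluating $A(e_{k_m})$ coordinate by coordinate shows that the $l$-th bit is flipped relative to $a_l$ exactly when $i_l = k_m$, and by the disjointness of the blocks together with $k^\star \neq k_m$ this happens precisely on $l \in B_m$. Hence $A(e_{k_m})$ equals $a$ with the bits on $B_m$ flipped, so $g(e_{k_m}) \neq g(0^n)$ by the block-sensitivity hypothesis; since $k_1,\ldots,k_b$ are distinct, this exhibits $b$ sensitive neighbours of $0^n$ for $g$ and gives part~(a).

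There is no genuine obstacle; the only points that need attention are bookkeeping. First, I must keep $k^\star$ outside the trigger set so that flipping a single $x_{k_m}$ does not accidentally perturb the coordinates of $A(x)$ lying outside $B_m$; this forces the mild split into cases $b<n$ and $b=n$. In the edge case $b=n$, the disjoint blocks are forced to be singletons, so $\bsens(f,a) = \sens(f,a)$ and the trivial shift $A(x) = x \oplus a$ already works. Second, I should note that the matrix $M$ whose $l$-th row is $e_{i_l}^\top$ is in general non-invertible (because the $i_l$'s repeat), which is fine since the statement only requires $A$ to be an affine transform over $\F_2^n$, not an invertible one.
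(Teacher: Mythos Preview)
Your proposal is correct and follows essentially the same idea as the paper: define the linear part of $A$ so that each of $b$ distinguished basis vectors is sent to the indicator $e_{B_m}$ of a sensitive block, making $g(e_{k_m}) = f(a \oplus e_{B_m}) \ne f(a) = g(0^n)$. The only cosmetic difference is your introduction of a sink index $k^\star$ for coordinates outside every block (and the attendant $b=n$ edge case), whereas the paper simply sets $L(e_i)=0^n$ on the non-trigger basis vectors; your choice has the mild advantage of making part~(b) hold verbatim even at those coordinates.
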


The above transformation is used in  Nisan and Szegedy (see Lemma 7 of~\cite{NS94}) 
to show that $\bsens(f) \le 2\deg(f)^2$. Here, $\deg(f)$ is the degree of the 
multilinear polynomial over reals that agrees with $f$ on Boolean inputs.
 We show another application of~
\cref{lt:bs-s} in the context of quantum communication complexity, a model for 
which 
was introduced by Yao~\cite{Y93}. In this model, two parties Alice and Bob have to 
compute a function $F:\zon \times \zon \to \zo$, where Alice is given an $x \in 
\zon$ and Bob is given a $y \in \zon$. Both the parties have to come up with a 
\emph{quantum protocol} where they communicate qubits via a quantum channel and 
compute $f$ while minimizing the number of qubits exchanged (which is the 
\emph{cost} of the quantum protocol) in the process. In this model, we allow 
protocols to have prior entanglement. We define $Q^*_{1/3}(F)$ as the minimum cost 
quantum protocol computing $F$ with prior entanglement. For more details on this
model, see~\cite{R03}. The corresponding analog in the classical setting is the 
bounded error randomized communication model where the parties communicate with
$0,1$ bits and share an unbiased random source. We define $\rand_{1/3}(F)$ as the 
minimum cost randomized protocol computing $F$ with error at most $1/3$. It can be 
shown that $Q^*_{1/3}(F) \le \rand_{1/3} (F) \le \D(F)$.

One of the fundamental goals in quantum communication complexity is to see if there are 
functions where their randomized communication complexity is significantly larger than their quantum communication complexity. It has been the conjectured by Shi and Zhu~\cite{SZ09} that this is not the case in general (which they called the Log-Equivalence Conjecture). In this work, we are interested in the case when $F(x,y)$ is of the form $f(x \land y)$ where $f:\zon \to \zo$ and $x \land y$ is the string obtained by bitwise AND of $x$ and $y$.
\begin{question}
\label{qc-equiv}
For $f:\zon \to \zo$, let $F: \zon \times \zon \to \zo$ be defined as $F(x,y) = f(x 
\land y)$. Is it true that for any such $F$, $\D(F) \le \poly(Q^*_{1/3}(F))$ ?
\end{question}

Since $\rand_{1/3}(F) \le \D(F)$, answering the above question in positive would show that the
classical randomized communication model is as powerful as the quantum communication model 
for the class of functions $F(x,y) = f(x \land y)$. This question for such restricted $F$ 
has also been proposed by Klauck~\cite{K07} as a first step towards answering the general 
question (see also~\cite{BW01}). In this direction, Razborov~\cite{R03} showed that for the special case when $f
$ is symmetric, $F(x,y) = f(x \land y)$ satisfy $\D(F) \le O(Q^*_{1/3}(F)^2)$. In the process, Razborov developed powerful techniques to obtain lower bounds on $Q^*_{1/3}(F)$ which were subsequently generalized by Sherstov~\cite{S08}, Shi and Zhu~\cite{SZ09}. Subsequently, in a slightly different direction, Sherstov~\cite{S10} showed that instead of computing $F(x,y) = f(x \land y)
$ alone, if we consider $F$ to be the problem of computing both of $F_1(x,y) = f(x \land y)$ and 
$F_2(x,y) = f(x \lor y)$, then $\D(F) = O(Q^*_{1/3}(F)^{12})$ for all Boolean functions $f$ where $Q^*_{1/3}(F) = \max\set{Q^*_{1/3}(F_1), Q^*_{1/3}(F_2)}$ and $\D(F) =\max\set{\D(F_1), \D(F_2)}$. Using~\cref{lt:bs-s}, we build 
on the ideas of Sherstov~\cite{S10} and obtain a lower bound for $Q^*_{1/3}(F)$ where $F(x,y) = F_1(x,y) = f(x \land y)$.
\begin{theorem}\label{ub:quantum:bs}
Let $f:\zon \to \pmo$ and $F(x,y) = f(x \land y)$, then,
$$Q^*_{1/3}(F) = \Omega\left(\sqrt{\bsens(f,0^n)}\right).$$
\end{theorem}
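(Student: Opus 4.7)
The plan is to combine \cref{lt:bs-s} with a reduction from unique disjointness on a suitably derived function. Let $s \defn \bsens(f, 0^n)$.

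First, I would apply \cref{lt:bs-s} with $a = 0^n$ to obtain a function $g(x) = f(x_{i_1}, \ldots, x_{i_n})$ (the XOR contribution vanishes since $a = 0^n$) satisfying $\sens(g, 0^n) \ge s$. Set $G(x,y) \defn g(x \land y)$; expanding,
\[
G(x, y) = f\bigl(x_{i_1} \land y_{i_1}, \ldots, x_{i_n} \land y_{i_n}\bigr),
\]
so Alice and Bob can compute $G(x,y)$ with any protocol for $F$ by locally forming the vectors $(x_{i_1}, \ldots, x_{i_n})$ and $(y_{i_1}, \ldots, y_{i_n})$ and invoking the $F$-protocol on them. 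Hence $Q^*_{1/3}(G) \le Q^*_{1/3}(F)$.

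Next, I would lower bound $Q^*_{1/3}(G)$ using $\sens(g, 0^n) \ge s$. Let $j_1, \ldots, j_s \in [n]$ be sensitive coordinates of $g$ at $0^n$, so $g(0^n) \ne g(e_{j_k})$ for every $k$. I would then reduce unique disjointness on $s$ bits---the promise problem $\text{UDISJ}_s$ where $|u \land v| \le 1$---to $G$ as follows: Alice and Bob construct $x, y \in \zon$ by placing $u_k, v_k$ respectively at coordinate $j_k$ (and zeros elsewhere). Then $x \land y$ equals $0^n$ when $u, v$ are disjoint and $e_{j_k}$ at the unique intersection index $k$ otherwise, so $G(x,y)$ alternates between the two distinct values $g(0^n)$ and $g(e_{j_k})$, thereby solving $\text{UDISJ}_s$. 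By Razborov's quantum lower bound, which is known to apply under the unique-intersection promise, $Q^*_{1/3}(G) \ge Q^*_{1/3}(\text{UDISJ}_s) = \Omega(\sqrt{s})$. Chaining gives $Q^*_{1/3}(F) \ge \Omega(\sqrt{\bsens(f, 0^n)})$.

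The main obstacle is cleanly invoking a sharp quantum lower bound on the embedded hard problem; this is where Sherstov's framework from~\cite{S10} becomes relevant. Sherstov's pattern matrix method is designed to lift approximate-degree lower bounds to $Q^*_{1/3}$ bounds, and in~\cite{S10} it is deployed for $\max(F_1, F_2)$ rather than $F_1$ alone. The use of \cref{lt:bs-s} here is precisely the device that bypasses the $\lor$-side of that argument: it converts $\bsens(f, 0^n)$ into $\sens(g, 0^n)$, supplying the single-bit sensitivity witness needed to run the embedding purely on the $\land$-side. The delicate step will be confirming that the specific structure produced by \cref{lt:bs-s} (the indices $i_1,\ldots,i_n$ are not necessarily distinct, so $g$ may depend on few effective variables) composes with Sherstov's machinery so that the embedding genuinely yields $\Omega(\sqrt{s})$ rather than a polynomially weaker bound.
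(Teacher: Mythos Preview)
Your proposal is correct and follows essentially the same route as the paper: the paper applies \cref{lt:bs-s} at $a=0^n$, argues that $G$ is a submatrix of $F$ (which is exactly your local-simulation reduction) so that $Q^*_{1/3}(F)\ge Q^*_{1/3}(G)$, and then invokes Sherstov's Theorem~4.2 from~\cite{S10} as a black box to get $Q^*_{1/3}(G)=\Omega(\sqrt{\sens(g,0^n)})$, which is precisely the Razborov unique-disjointness embedding you spell out by hand. Your concern in the final paragraph is unwarranted: repeated indices in $(i_1,\ldots,i_n)$ cause no loss, since the lower bound only uses the $s$ distinct sensitive coordinates of $g$ at $0^n$.
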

In this context, we make an important comparison\footnote{Recently, it was noticed that \cref{ub:quantum:bs} had already appeared in arXiv version 1 of \cite{S10arxiv} but did not appear in later versions.}
with a result of Sherstov~\cite{S10}. He proved that  for  
$F'(x,y) = f_b(x \land y) $, where $b \in \zon$ is the input on which $\bsens(f,x)
$ is maximum, $Q^*_{1/3}(F') = \Omega(\sqrt{\bsens(f)}) \ge 
\Omega(\sqrt{\bsens(f,0^n)})$ (Corollary 4.5 of~\cite{S10}).  Notice that $F$ and $F'$ differ by a linear shift 
of $f$ with $b$.\footnote{More importantly, this $b$ in Corollary 4.5 of~\cite{S10} cannot be fixed to $0^n$ for all Boolean functions  to conclude~\cref{ub:quantum:bs}. See~\cref{app:sherstov} for details.}  Moreover, $Q^*_{1/3}(F)$ can change drastically even under such 
(special) linear shifts of $f$. For example, consider $f=\land_n$. Since $
\bsens(f)$ is maximized at $1^n$, $b=1^n$. Hence, the function $F'$ is the 
disjointness function for which $Q^*_{1/3}(F') = \Omega(\sqrt{n})$~
\cite{R03} whereas, $Q^*_{1/3}(F) = O(1)$. The same counterexample also shows that 
$Q^*_{1/3}(F) = \Omega(\sqrt{\bsens(f)})$ cannot hold for all $f$ 
(see~\cref{rem:quantum}). Since the lower bounds shown on quantum communication complexity are on different functions, \cref{ub:quantum:bs} is incomparable with the result of Sherstov (Corollary 4.5 of~\cite{S10}).

Using the above result, for a prime $p$, we show that if $f$ has small degree when expressed as a polynomial over $\F_p$ (denoted by $\deg_p(f)$), the quantum communication complexity of $F$ is large. 
\begin{theorem}\label{thm:quant:lb}
Fix a prime $p$. Let $f:\zon \to \pmo$ where $f$ depends on all the variables. Let $F(x,y) = f(x \land y)$. For any $0 < \epsilon < 1$  such that $\deg_p(f) \le (1-\epsilon)\log n$, we have  $$Q^*_{1/3}(F) = \Omega\left (\frac{n^{\epsilon/2}}{\log n} \right ).$$ 
\end{theorem}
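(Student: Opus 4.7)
The plan is to invoke Theorem~\ref{ub:quantum:bs}, which gives $Q^*_{1/3}(F) = \Omega(\sqrt{\bsens(f, 0^n)})$, reducing the task to establishing the combinatorial lower bound $\bsens(f, 0^n) = \Omega(n^{\epsilon}/\log^2 n)$ from the hypotheses that $f$ depends on all $n$ variables and $\deg_p(f) \le (1-\epsilon)\log n$. Substituting this into Theorem~\ref{ub:quantum:bs} and taking square roots then yields the claimed $Q^*_{1/3}(F) = \Omega(n^{\epsilon/2}/\log n)$.

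For the combinatorial step, I would work with the multilinear representation $f = \sum_T c_T \prod_{i \in T} x_i$ over $\F_p$, where every monomial has degree at most $d = \deg_p(f)$. Since $f$ depends on every variable, each $i \in [n]$ belongs to some non-zero monomial of size at most $d$; hence the non-zero monomials form a cover of $[n]$ by subsets of size at most $d$. From this cover I would extract a disjoint sub-collection (a hypergraph matching) via a greedy or LP-duality argument for $d$-bounded hypergraphs, and for each matched monomial $T$ I would apply a Schwartz--Zippel-style bound to produce a sensitive sub-block $U \subseteq T$ at $0^n$: the polynomial $f|_T - f(0^n)$ (obtained by zeroing variables outside $T$ and subtracting the constant) has degree $\le d$ and is non-zero because $c_T \ne 0$, so some Boolean point $\mathbf{1}_U$ with $U \subseteq T$ satisfies $f(\mathbf{1}_U) \ne f(0^n)$. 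The sub-blocks across distinct matched monomials are automatically disjoint, giving the desired lower bound on $\bsens(f, 0^n)$ after substituting $d \le (1-\epsilon)\log n$ and $2^d \le n^{1-\epsilon}$.

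The main obstacle is the matching-extraction step. In pathological configurations---where many non-zero monomials share a common variable (a ``star'' or sunflower-like cover)---the maximum matching in the monomial hypergraph can be much smaller than $n/d^2$, and the direct argument does not suffice. To handle such $f$ uniformly, I would combine the polynomial argument with Lemma~\ref{lt:bs-s}: applying the affine transformation at an input $a \in \{0,1\}^n$ where $\bsens(f, a)$ is large yields a function $g$ with $\sens(g, 0^n) \ge \bsens(f, a)$, whose ``duplicated variables plus shift'' structure renders the matching argument tractable on the polynomial of $g$ (whose degree is still at most $d$). The subtle final step is transferring the resulting lower bound on $\bsens(g, 0^n)$ back to $Q^*_{1/3}(F)$ rather than merely $Q^*_{1/3}(g(x \land y))$; this uses the explicit form of the affine transformation from Lemma~\ref{lt:bs-s}, where the permutation/duplication part of the transformation is implementable by Alice and Bob locally on their inputs to $F = f(x \land y)$, absorbing at most a logarithmic loss in the final bound.
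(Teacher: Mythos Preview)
Your overall plan---apply \cref{ub:quantum:bs} and then show $\bsens(f,0^n)=\Omega(n^{\epsilon}/\log^2 n)$---has a genuine gap: that target bound is simply false in general. Take $f(x)=x_1\cdot(x_2\oplus\cdots\oplus x_n)$ over $\F_2$; it depends on all $n$ variables and has $\deg_2(f)=2$, yet every block $B$ with $f(e_B)\ne f(0^n)$ must contain index~$1$, so $\bsens(f,0^n)=1$. This is exactly the ``star'' obstruction you flagged, and it is intrinsic to the quantity $\bsens(f,0^n)$, not merely to your matching heuristic. Your proposed workaround via \cref{lt:bs-s} at an input $a$ with large $\bsens(f,a)$ then fails for the reason the paper itself stresses (\cref{rem:quantum} and the discussion after \cref{ub:quantum:bs}): the affine map is $x\mapsto L(x)\oplus a$, and while the variable-duplication $L$ is locally simulable by Alice and Bob on $x\land y$, the shift by $a$ is not. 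What you end up lower-bounding is $Q^*_{1/3}(f_a(x\land y))$, not $Q^*_{1/3}(F)$.

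For comparison, the paper does not attack $\bsens(f,0^n)$ via a monomial matching. It invokes \cref{bound:dt:bs:degtwo}, namely $\DT(f)\le\bsens(f,0^n)\cdot\deg_p(f)^2$, and then lower-bounds $\DT(f)\ge\deg(f)\ge n/2^{\deg_p(f)}$ using the Gopalan~\etal\ / Nisan--Szegedy inequality; substituting $\deg_p(f)\le(1-\epsilon)\log n$ finishes. It is worth testing your star example against this route too: there $\DT(f)=n$ while $\bsens(f,0^n)\cdot\deg_2(f)^2=4$, so \cref{bound:dt:bs:degtwo} as stated also appears to break---the Nisan--Smolensky recursion that underlies it needs $\bsens(f)$ (which is monotone under restriction), not $\bsens(f,0^n)$ (which is not). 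So your instinct about the obstruction was well-founded and in fact points at a subtlety in the paper's own argument.
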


Observe that, though~\cref{ub:quantum:bs} does not answer~\cref{qc-equiv} in positive for all functions, we could show a class of Boolean function for which $\D(F)$ and $Q^*_{1/3}(F)$ are polynomially related. More specifically, we show this for the set of all Boolean functions $f$ such that there exists two distinct primes $p,q$ with $\deg_p(f)$ and $\deg_q(f)$ are sufficiently far apart (\cref{thm:qc-eq:spl}).
\begin{theorem}\label{thm:qc-eq:spl}
Let $f:\zon \to \pmo$ with $F(x,y) = f(x \land y)$. 
Fix $0 < \epsilon < 1$. If there exists distinct primes $p$, $q$ such that $\deg_q(f) = \Omega ( \deg_p(f)^{\frac{2}{1-\epsilon}})$, then $\D(F) = O(Q^*_{1/3}(F)^{2/\epsilon})$.
\end{theorem}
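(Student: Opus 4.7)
The strategy is to combine the trivial upper bound $\D(F) \leq n+1$ (Alice sends her $n$-bit input $x$ to Bob, who then computes $f(x \land y)$) with the quantum lower bound of \cref{thm:quant:lb}. Since the target conclusion is $\D(F) = O(Q^*_{1/3}(F)^{2/\epsilon})$, it suffices to argue that $Q^*_{1/3}(F) = \tilde\Omega(n^{\epsilon/2})$ under the hypothesis, because substituting this into the trivial upper bound gives exactly the claimed polynomial relation.

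Assume without loss of generality that $\deg_p(f) \leq \deg_q(f)$ (the hypothesis forces this ordering up to renaming the primes), and that $f$ depends on all $n$ variables (else restrict to the influencing set, which only decreases $n$ while preserving both $\deg_p(f)$ and $\deg_q(f)$). The primary case is $\deg_p(f) \leq (1-\epsilon)\log n$: invoke \cref{thm:quant:lb} with prime $p$ and the given $\epsilon$ to obtain $Q^*_{1/3}(F) = \Omega(n^{\epsilon/2}/\log n)$, so that raising to the $2/\epsilon$ power and combining with $\D(F) \leq n+1$ yields
\[
    \D(F) \;\leq\; n+1 \;=\; O\bigl((\log n)^{2/\epsilon} \cdot Q^*_{1/3}(F)^{2/\epsilon}\bigr) \;=\; O\bigl(Q^*_{1/3}(F)^{2/\epsilon}\bigr),
\]
where the polylogarithmic overhead is absorbed by the polynomial dominance of $Q^*_{1/3}(F)$ established by the lower bound.

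In the residual case $\deg_p(f) > (1-\epsilon)\log n$, the hypothesis forces $\deg_q(f) = \Omega\bigl(((1-\epsilon)\log n)^{2/(1-\epsilon)}\bigr)$, so $\deg(f) \geq \deg_q(f)$ is super-polylogarithmic. I would then combine the standard inequality $\bsens(f) = \Omega(\sqrt{\deg(f)})$ (which locates an input $a \in \zon$ with $\bsens(f,a)$ large) with the affine-transform technique of \cref{lt:bs-s} to obtain a related function $g$ with $\sens(g,0^n) \geq \bsens(f,a)$, and finally invoke \cref{ub:quantum:bs} on a corresponding communication problem to recover the required lower bound on $Q^*_{1/3}(F)$.

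\textbf{Main obstacle.} The residual case is the technical crux: \cref{ub:quantum:bs} lower-bounds $Q^*_{1/3}(F)$ by $\sqrt{\bsens(f,0^n)}$, and $\bsens(f,0^n)$ can be dramatically smaller than $\bsens(f)$ (as emphasized in the discussion after \cref{ub:quantum:bs} with $f = \land_n$). The transformation of \cref{lt:bs-s} moves a sensitive input to $0^n$ of a \emph{related} function $g$, not of $f$ itself, which means the associated communication problem $G(x,y) = g(x \land y)$ need not be $F$; the delicate step is to track how the quantum lower bound on the transformed problem transfers back to $Q^*_{1/3}(F)$ through the specific structure of the affine transform used.
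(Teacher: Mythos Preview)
Your plan does not close, and the obstacle you flag in the residual case is exactly the point where it fails. When $a\neq 0^n$, the affine shift in \cref{lt:bs-s} produces $g(x)=f((x_{i_1},\dots,x_{i_n})\oplus a)$, and the communication matrix $G=[g(x\land y)]$ is \emph{not} a submatrix of $F$; the paper stresses this explicitly (see \cref{rem:quantum} and the discussion after \cref{ub:quantum:bs}). So there is no mechanism to transfer a lower bound on $Q^*_{1/3}(G)$ back to $Q^*_{1/3}(F)$, and the residual case cannot be completed along the line you sketch. Even your primary case loses a polylogarithmic factor: from $Q^*_{1/3}(F)=\Omega(n^{\epsilon/2}/\log n)$ and $\D(F)\le n+1$ you only get $\D(F)=O\bigl((\log n)^{2/\epsilon}\,Q^*_{1/3}(F)^{2/\epsilon}\bigr)$, and that $(\log n)^{2/\epsilon}$ does not vanish.

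The paper avoids both the case split and the polylog loss by working with $\DT(f)$ rather than $n$. The two ingredients you are missing are:
\begin{itemize}
\item[(i)] the communication upper bound $\D(F)\le 2\,\DT(f)$ (rather than the trivial $\D(F)\le n+1$), and
\item[(ii)] \cref{bound:dt:bs:degtwo}, which gives $\bsens(f,0^n)\ge \DT(f)/\deg_p(f)^2$ directly at the input $0^n$, so no shift is ever needed.
\end{itemize}
Combining (ii) with \cref{ub:quantum:bs} yields $Q^*_{1/3}(F)=\Omega\bigl(\sqrt{\DT(f)}/\deg_p(f)\bigr)$. The hypothesis together with $\deg_q(f)\le \DT(f)$ gives $\deg_p(f)=O\bigl(\DT(f)^{(1-\epsilon)/2}\bigr)$, hence $Q^*_{1/3}(F)=\Omega\bigl(\DT(f)^{\epsilon/2}\bigr)$, and (i) finishes the argument cleanly: $\D(F)\le 2\,\DT(f)=O\bigl(Q^*_{1/3}(F)^{2/\epsilon}\bigr)$. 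No case analysis, no attempt to relocate a maximizing input, and no polylog residue.
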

 By the result of Gopalan \etal (Theorem 1.2, \cite{GLS09}), any Boolean function $f$ with $\deg_p(f) = o(\log n)$ must have $\deg_q(f) = \Omega(n^{1-o(1)})$ thereby satisfying the condition of ~\cref{thm:qc-eq:spl}. Hence for all such functions, \cref{thm:qc-eq:spl} answers~\cref{qc-equiv} in positive. Observe that the same can also be derived from~\cref{thm:quant:lb}.

\noindent
\textbf{Alternation under Linear Transforms :} We now restrict our study to linear 
transforms. Again, in this context, the aim is to design special linear transforms for the parameters of interest.   In 
particular, in this case, we show linear transforms for which we can upper bound 
the alternation of the original function in terms of the sensitivity of the 
resulting function. More precisely, we prove the following lemma:
\begin{lem}\label{alt-sens-lt}
	For any $f:\F_2^n \to \zo$, there exists an invertible linear transform $L:
	\F_2^n 	\to \F_2^n$	such that for $g(x) = f(L(x))$, $$\alt(f) \le 2\sens(g)+1.$$ 
\end{lem}

We show an application of the above result in the context of the parameter sensitivity.
Nisan and Szegedy~\cite{NS94} showed that for any Boolean function $f$, $\sens(f) \le 2\deg(f)^2$. However, the situation is quite different for $\degtwo(f)$ - noticing that 
for $f$ being parity on $n$ variables, $\degtwo(f)=1$ and $\sens(f)=n$ - the gap can even 
be unbounded. 
Though parity may appear as a  corner case, there are other functions like the Boolean inner product function\footnote{$\INP_n(x_1,x_2,\ldots,x_n,y_1,y_2,\ldots,y_n) = \sum_i x_iy_i \mod 2$} $\IP_n$ 
whose $\F_2$-degree is constant while sensitivity is $\Omega(n)$ thereby ruling out the possibility that $\sens(f) \le \degtwo(f)^2$. 
It is known that if $f$ is not the parity on $n$ variables (or its negation), $\degtwo(f) \le \log \sps(f)$~\cite{BC99,GOSSW11}. Hence, as a structural question about the two parameters, we ask : for $f$ other than the parity function, is it true that $\sens(f) \le \poly(\log \sps(f))$.\footnote{Observe that functions like $\IP_n$ though have low $\F_2$-degree similar to parity however have high sparsity and hence does not rule this out.} In fact, the Sensitivity Theorem (\cref{sens:thm}) by~\cite{H19} implies that for every Boolean function $f$, $\log \sps(f) = O(\sens(f)^2)$. Hence, if we could answer our question in affirmative, it would imply that $\sens(f)$ and $\log \sps(f)$ are polynomially related. We use~\cref{alt-sens-lt}, which is in 
the theme of studying alternation and sensitivity in the context of linear 
transformations, to show that this is not the case, by exhibiting a family of functions where the gap is exponential.

\begin{theorem}\label{gap:salt:degtwo}
There exists a family of functions $\set{g_k\mid k \in \N}$ such that $$\sens(g_k) 
\ge \frac{\sqrt{\sps(g_k)}}{2}-1.$$
\end{theorem}

This family of Boolean functions also rules out a potential approach to settle the XOR Log-Rank conjecture via the recently settled Sensitivity conjecture~\cite{H19}. We elaborate on this approach and how our function family rules it out in~\cref{sec:sens-sps}. 
\section{Preliminaries}\label{sec:prelims}
In this section, we define the notations used. Define $[n] = \set{1,2,\ldots,n}$. 
For $S \sse [n]$, define $e_S \in \zon$  to be the indicator vector of the set $S$. 
For $x,y \in \zon$, we denote $x \land y$ (resp. $x\oplus y$) $\in \zon$ as the 
string obtained by bitwise AND (resp. XOR) of $x$ and $y$. We use $x_i$ to denote the $i^{th}$ bit of $x$.

We now define the Boolean function parameters we use. Let $f:\zon \to \zo$ and $a 
\in \zon$, we define, 1) the \emph{sensitivity} of $f$ on $a$ as $\sens(f,a) = |
\set{i \mid  f(a \oplus e_i) \ne f(a), i \in [n] }|$, 2) the \emph{block 
sensitivity} of $f$ on $a$, $\bsens(f,a)$ to be the maximum number of disjoint 
blocks $\set{B_i \mid B_i \sse [n]}$ such that $f(a \oplus e_{B_i}) \ne f(a)$ and 
3) the \emph{certificate complexity} of $f$ on $a$, $\cert(f,a)$ to be the size of 
the smallest set $S \sse [n]$ such that fixing $f$ according to $a$ on the location 
indexed by $S$ causes the function to become constant. For $\phi \in \set{\sens, 
\bsens, \cert}$, we define $\phi(f) = \max_{a \in \zon} \phi(f,a)$ and are 
respectively called the sensitivity, the block sensitivity and the certificate 
complexity of $f$. By definition, the three parameters are shift invariant, by 
which we mean $\forall~b \in \zon$, $\phi(f_b) = \phi(f)$ for $\phi \in \set{\sens, 
\bsens,\cert}$ where $f_b(x) \defn f(x \oplus b)$. Also, it can be shown that $\sens(f) \le \bsens(f) \le \cert(f)$.

For $x,y \in \zon$, define $x \prec y$ if $\forall i \in [n]$, $x_i \le y_i$. We 
define a \emph{chain} $\calC $ on $\zon$ as $(0^n = x^{(0)},x^{(1)},\ldots, x^{(n-1)},x^{(n)}=1^n)$ such that for all $i \in [n]$, $x^{(i)} \in \zon$ and 
$x^{(i-1)} \prec x^{(i)}$ . We define \emph{alternation} of $f$ for a chain $
\calC$, denoted $\alt(f,\calC)$ as the number of times the value of $f$ changes in the 
chain. We define alternation of a function $\alt(f)$ as $\max_{\text{ chain }\calC} 
\alt(f,\calC)$.

Every Boolean function $f$ can be expressed uniquely as a multilinear polynomial $p(x)$ in 
$\F[x_1,\ldots, x_n]$ over any field $\F$ such that $p(x) = f(x)~\forall x \in \zon$. Fix a prime $p$. We 
denote $\deg(f)$ (resp. $\deg_p(f)$) to be the degree of the multilinear polynomial 
computing $f$ over reals (resp. $\F_p$). We define $\DT(f)$ as the depth of an optimal decision 
tree computing $f$. It is known that for all Boolean functions $f$, $\deg_p(f) \le 
\deg(f) \le \DT(f) \le \bsens(f)^3$. 

Sparsity of a Boolean function $f:\zon \to \pmo$ (denoted by $\sps(f)$) is the number of non-zero Fourier coefficients in the Fourier representation of $f$. For more details on this parameter, see~\cite{AOBF}. For more details on $\DT(f)$ and other related parameters, 
see the survey by Buhrman, de Wolf~\cite{BW02} and Hatami \etal~\cite{HKP11}.

We consider the two party classical communication model. Given a function $f:
\zon \times \zon \to \zo$, Alice is given an $x \in \zon$ and Bob is given $y 
\in \zon$. They can communicate with each other and their aim is to compute 
$f(x,y)$ while communicating minimum number of bits. We call the procedure 
employed by Alice and Bob to computing $f$ as the \emph{protocol}. We define $
\D(f)$ as the minimum cost of a deterministic protocol computing $f$. For 
functions of the form $F(x,y) = f(x\land y)$, it is known that $\D(F) \le 
2\DT(f)$~\cite{MO09}. For more details on communication complexity of Boolean 
functions, refer~\cite{textbook}.

\section{Warm up: Alternation under Shifts} \label{sec:salt}

In this section, as a warm-up, we study sensitivity and alternation under linear shifts (when the matrix $M$ is the identity matrix).
We introduce a parameter, \textit{shift-invariant alternation} ($\salt$). We then show the existence of Boolean functions whose shift-invariant alternation is exponential in its sensitivity (see \cref{salt:gap}) thereby ruling out the possibility that $\salt(f)$ can be upper bounded by a polynomial in $\sens(f)$ for all Boolean functions $f$.

Recall from the introduction that the parameters $\sens,\bsens$ and $\cert$ are shift invariant while $\alt$ 
is not. To see that $\alt$ is not shift-invariant, for an even number $n$, consider the Boolean function defined as $\maj_n(x) = 1 \iff \sum_i x_i > n/2 $. For an even $n$, define $\smaj_n(x) = \maj_n(x\oplus 1^{n/2}0^{n/2})$. It is possible to exhibit a chain $\sigma$ such that $\alt(\smaj_n,\sigma) = n$, while $\alt(\smaj_n(x\oplus 1^{n/2}0^{n/2})) = \alt(\maj_n)= 1$. 

We define a variant of alternation which is invariant under shifts.
\begin{defnn}[\textbf{Shift-invariant Alternation}]
\label{def:salt}
For $f : \zon \to \zo$, the \emph{shift-invariant alternation} (denoted by $\salt(f)$) is defined as $\min_{b \in \zon} \alt(f_b)$. 
\end{defnn}
We remark that $\salt(\smaj_n)=1$. Hence the gap between measures $\alt$ and $\salt$ can be unbounded.

{\noindent \textbf{A family of functions with $\salt(f) = \Omega(2^{\sens(f)})$ :}
We now exhibit a family of functions $\calF$ where for all $f \in \calF$, $\salt(f) \ge 2^{\sens(f)}$ thereby ruling out the possibility that $\salt(f)$ can be upper bounded by a polynomial in $\sens(f)$. The family $\calF$ is the same class of Boolean functions for which alternation is at least exponential in sensitivity due to~\cite{JDS18}.

\begin{defnn}[Definition 1 from~\cite{JDS18}. See also Proof of Lemma A.1 of~
\cite{GSW16}]\label{def:fun}
Consider the family defined as follows. $$\calF = \set{f_k \mid f_k: \{0,1\}^{2^k-1}  \to \zo,  k \in \N}$$ 
 The Boolean function $f_k$ is computed by a decision tree which is a 
full binary tree of depth $k$ with $2^k$ leaves. A leaf node is labeled as $0$ (resp. 
$1$) if it is the left (resp. right) child of its parent. All the nodes (except the 
leaves) are labeled by a distinct variable. 
\end{defnn}
We remark that Gopalan \etal~\cite{GSW16} demonstrates an exponential lower bound on tree sensitivity (introduced by them as a generalization of the 
parameter sensitivity) in terms of decision tree depth for the same family of functions in~\cref{def:fun}. We remark that, in general,  lower bound on tree sensitivity need not implies a lower bound on alternation. For instance, if we consider the Majority function $\maj_n$, the tree sensitivity can be shown to be $\Omega(n)$ while alternation is $1$. 

The authors~\cite{JDS18} have shown that for any $f \in \calF$,  there exists of a chain of large alternation in $f$. However, this is not sufficient to argue existence of a chain of large alternation under \textit{every} linear shift.
We now proceed to prove an exponential lower bound on $\salt(f)$ in terms of $s(f)$ for all $ f\in \calF$. 
\begin{repprop}{salt:gap}
For $f_k \in \calF$,  $\salt(f_k) \ge 2^{\Omega(\sens(f_k))}$. 
\end{repprop}
\begin{proof}
We show\footnote{In this proof, for simplicity, we abuse the notation $f_k(x \oplus c)$ to denote the function obtained by shifting $f_k$ by $c$.} that for $f_k \in \calF$ and $n = 2^k-1$, for all $c \in \zon$, $\alt(f_k(x\oplus c)) \ge 2^{k-2}$. Since $\sens(f_k) \le k$ by construction of $f_k$, the result follows. 

Proof is by induction on $k$. For $k=2$, $f$ is a function on $3$ variables and it can be 
verified that for all $c \in \zo^3$, $\alt(f(x\oplus c)) \ge 1$. Now consider an $f_{k+1} 
\in \calF$ computed by a decision tree $T$ with the variable $x_t$ as its root. Let $h_1$ 
and $h_2$ be the left and right subtrees of $x_t$ in $T$. Note that $h_1(z')$ and $h_2(z'')$ depends on $n=2^k-1$ variables and belongs to $\calF$ by construction. Hence, by induction, for all 
$c \in \zo^{n}$, $\alt(h_1(z' \oplus c))$ and $\alt(h_2(z'' \oplus c))$ is at least 
$2^{k-2}$. For $m=2^{k+1}-1$, consider any $c = (c',b,c'') \in \zo^m$ where $c',c'' \in \zon$ and $b \in \zo$. Since $h_1$ and $h_2$ are 
variable disjoint, $\alt(f(x\oplus c)) \ge \alt(h_1(z'\oplus c')) + \alt(h_2(z''\oplus c'')) 
\ge 2^{k-2}+2^{k-2} = 2^{k-1}$ completing the induction.
\end{proof}

\noindent
{\bf A family of functions with $\bsens(f) = \Omega(\sens(f)\salt(f))$ :}
Lin and Zhang~\cite{LZ17} showed that for any Boolean function $f:\zon \to \zo$,
\begin{equation} \label{eq:bs:alt:sens}
\bsens(f) = O(\alt(f)^2 \sens(f))
\end{equation} 

	The fact that the measures $\bsens$ and $\sens$ are invariant under shifts implies the following proposition.
\begin{proposition}\label{salt}
For any $f:\zon \to \zo$, $\bsens(f)  \le O(\salt(f)^2 \sens(f))$.
\end{proposition}
\begin{proof}
	For any $b \in \zon$, recall that $f_b(x)$ is defined to be $f(x \oplus b)$.  Applying~\cref{eq:bs:alt:sens} to $f_b$, we get that $\bsens(f_b) = O(\alt(f)^2 \sens(f_b))$. Since, $\bsens$ and $\sens$ are invariant under shifts, for any $b$, $\bsens(f) = \bsens(f_b) = O(\alt(f_b)^2 \sens(f_b)) = O(\alt(f_b)^2 \sens(f))$. Choosing $b$ to be a shift that minimizes the alternation of $f_b$ completes the proof.
\end{proof}
We now exhibit a family of functions for which $\bsens(f)$ is at least $\frac{\sens(f)\cdot \salt(f)}{4}$.

Before proceeding, we show a tight composition result for alternation of Boolean functions when composed with $OR_k$ (which is the $k$ bit Boolean OR function).

For functions $f_1, \ldots, f_k$ where each $f_i: \zon \to
\zo$, define the function $OR_k \circ \overline{f}: \zo^{nk} \to \zo$ as
$\lor_{i=1}^k f_i(x^{(i)})$ where for each $i \in [k]$, $x^{(i)} = (x^{(i)}_1, 
\ldots, x^{(i)}_n) \in \zon$ is input to the function $f_i$.

\begin{lem}\label{alt:comp:or}
Consider $k$ Boolean functions $f_1, \ldots, f_k$ where each $f_i: \zon \to
\zo$ satisfy, $f_i(0^n) = f_i(1^n) = 0$. Then, $$
	\alt( OR_k \circ \overline{f})= \sum_{i=1}^k \alt(f_i). $$
\end{lem}
\begin{proof}
	Let $f = OR_k \circ \overline{f}$ and $\calC$ be a chain in $\zo^{nk}$ for which $
	\alt(f,\calC)$ is maximized. Without loss of generality, let all the functions be non-constant. Let $\calC_i$ be the chain in $\zon$ obtained by 
	restricting  $\calC$ to variables $x_1^{(i)}, \ldots, x_n^{(i)}$ of $f_i$. Observe 
	that if $f$ changes it value, it must be that at least 
	one of the $f_i$'s have changed their evaluation along the chain $\calC$. Since the functions are variable 
	disjoint, such a change must be witnessed in the chain $\calC_i$ for some $i$. Hence 
	$$\alt(f) = \alt(f, \calC) \le \sum_{i=1}^k \alt(f_i, \calC_i) \le \sum_{i=1}^k 
	\alt(f_i) $$
To show that $\alt(f) \ge \sum_{i=1}^k \alt(f_i) $, we exhibit a chain $\calC$ in $
\zo^{nk}$ of alternation $\sum_{i=1}^k \alt(f_i) $. Let $\calC_i = (0^n = z^{(i0)} \prec z^{(i1)} \prec z^{(i2)} \prec \ldots \prec z^{(in)} = 1^n)$ be a chain in $\zon$ for which $f_i$ 
achieves maximum alternation. We construct a chain $\calC$ by ``gluing'' together these 
$k$ chains. More precisely, let $\calC$ by the chain such that for all $i \in [k]$, when 
restricted to the variables $x_{1}^{(i)}, \ldots, x_{n}^{(i)}$, we get a chain given by,
$$ \overbrace{0^n \prec \ldots \prec 0^n}^{n(i-1) \text{ times} }  \prec z^{(i0)} \prec z^{(i1)} \prec z^{(i2)} \prec \ldots \prec z^{(in)} \prec \overbrace{ 1^n \prec \ldots\prec 1^n}^{n(k-i) \text{ times}} 
$$
By construction of $\calC$, since $f_j(0^n) = f_j(1^n) = 0$ for all $j \in [k]$, at any input of the chain $\calC$, there is exactly one $f_i$ that causes $f$ to alternate. Hence, 
$$\alt(f,\calC) \ge \sum_{i=1}^k \alt(f_i, \calC_i) = \sum_{i=1}^k \alt(f_i) $$
\end{proof}

\begin{proposition}
\label{prop:tightness-salt}
There exists a family of Boolean functions for which $\bsens(f) \ge \frac{\sens(f) \cdot \salt(f)}{4}.$
\end{proposition}
\begin{proof}
We consider the Rubinstein's function $f_R:\{0,1\}^{n^2} \to
\{0,1\}$~\cite{Rub95} where the input is treated as $n\times n$ matrix 
which evaluates to $1$ iff there is a row with two consecutive ones starting at
the odd position and rest of the entries being zero. Alternatively, we can view $f_R$ as 
$OR_n \circ \overline{h}$ with $h:\zon \to \zo$ where $h(a) = 1$ iff there are two 
consecutive ones starting at the odd position with rest of the entries as zero in $a \in 
\zon$. It can be verified that $\alt(h) =2$. Since $h(0^n) = h(1^n) = 0$, applying~
\cref{alt:comp:or} with $f_i = h$ for all $i \in [n]$, we get that $\alt(f_R) =  \alt(h)
\cdot n = 2n$. 
It is known that $\bsens(f_R) \ge \frac{n^2}{2}$ while $\sens(f_R) \le n$~\cite{Rub95}, thereby showing that $\bsens(f_R) \ge \frac{\sens(f_R)
\cdot \alt(f_R)}{4} \ge \frac{\sens(f_R)
\cdot \salt(f_R)}{4}$. 
\end{proof}
We remark that the above bound is stronger than what is needed in the context because, $\bsens(f_R) \ge \frac{\sens(f_R)
\cdot \alt(f_R)}{4}$. 

\paragraph{Lower bounding $\salt$ :} By definition, $\salt(f) \le \alt(f)$ and in addition, we have seen a Boolean function $f$ for which $\salt(f) = 1$ while $\alt(f) = \Omega(n)$. This makes $\alt(f)$ particularly unsuitable in obtaining lower bounds on $\salt(f)$. We define a modified variant of the measure alternation called as \emph{subcube alternation} and show that this new measure is always a lower bound on $\salt(f)$.

To define this variant, we define the following notion of restrictions.
For any $S \sse [n]$, define $f|_S$ as the function $f$ defined on the domain $\set{x | x \le e_S}$ and $f|_{\overline{S}}(x)$ as $f(x\oplus e_S)$ for $\set{x | x \ge e_{S}}$.

\begin{defnn}[Subcube alternation]
	For a Boolean function $f$, define the subcube alternation $\scalt$ of $f$ as $\scalt(f)=	\min_{B \sse [n]} (\alt(f|_B) + \alt(f|_{\overline{B}})).$
\end{defnn}

More precisely (in \cref{scalt:lb}), we show that $\forall~f$, $\salt(f) \ge \scalt(f)$ . In arguing the same, we use the following claim which gives an exact expression for maximum alternation of a shifted functions over all chains that contain the shift. 

\begin{lemma} \label{shift:alt:lb}
	For $f:\zon \to \zo$ and any $B \subseteq [n]$, and let $\calC_B$ be the collection of maximal chains containing $e_B$. Then,
	\[\max_{\sigma \in \calC_B} \alt(f(x\oplus e_B), \sigma) = \alt(f|_B) + \alt(f|_{\overline{B}}).\]
\end{lemma}
\begin{proof}
	Let $g(x) = f(x \oplus e_B)$. Denote by $\overline{x}$ the bitwise
	complement of $x$. We claim that,
	\begin{align}
	\forall x : x \le e_B,~ & g(x) = f|_B(\overline{x}) \label{eq1}\\
	\forall x : x \ge e_B,~ & g(x) = f|_{\overline{B}}(x) \label{eq2}
	\end{align}
	\cref{fig:shifting} illustrates the subcubes of interest in the original function and how they change for the function under shift. Now for any chain $\sigma$ containing $e_B$ in the Boolean
	hypercube, $\alt(g,\sigma) = \alt(f|_B) + \alt(f_{\overline{B}})$.
	
	To see \cref{eq2} observe that for any $x \ge e_B$, $x = y \oplus e_B$
	with $y \le e_{\overline{B}}$. Hence $g(x) = f(y) =
	f|_{\overline{B}}(x)$.  For \cref{eq1}, since $x \le e_B$, $g(x) = f(x
	\oplus e_B) = f|_B(\overline{x})$ as restricted to $B$, $x \oplus B$
	complements $x$ (with locations outside $B$ set to $0$). 
	
	\begin{figure}[htp!]
		\centering
		\includegraphics[scale=0.8]{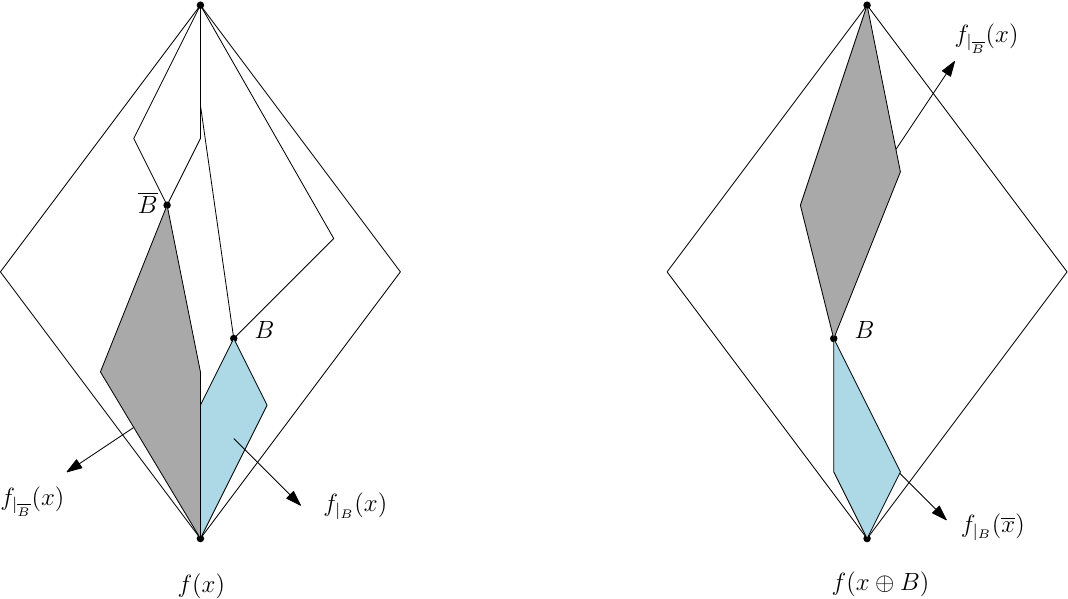}
		\caption{Boolean function $f$ under shift}
		\label{fig:shifting}
	\end{figure}	
	
	Any maximal chain $\sigma$ containing $e_B$ must completely lie in the subcubes $\set{x \mid x \le e_B}$ and $\set{x \mid x \ge e_B}$. Hence, $\max_{\sigma \in \calC_B} \alt(f(x\oplus e_B), \sigma) \le \alt(f|_B) + \alt(f|_{\overline{B}})$. 
	Also, any maximal chain in the subcubes mentioned can be combined in the natural way to get a maximal chain for the whole subcube which contains $e_B$. Hence $\max_{\sigma \in \calC_B} \alt(f(x\oplus e_B), \sigma) \ge \alt(f|_B) + \alt(f|_{\overline{B}})$. 
	
\end{proof}

We can now conclude the lower bound on $\salt$ using~\cref{shift:alt:lb}.
\begin{lemma} \label{scalt:lb}
	For any $f:\zon \to \zo$, 
	$\salt(f) \ge \scalt(f)$. 
	
\end{lemma}
\begin{proof}
	Let $S \sse [n]$ be a shift for which $\alt(f(x\oplus S))$ is minimum and $\calC_S$ denotes the maximal chains containing $e_S$. Hence, 
	\begin{equation}
	\salt(f) = \alt(f(x\oplus e_S)) \ge \max_{\sigma \in \calC_S} \alt(f(x\oplus e_S)) \label{eq:salt:alt}
	\end{equation}
	Combining with \cref{shift:alt:lb}, we have $\salt(f) \ge \alt(f|_S) + \alt(f|_{\overline{S}})$ which is at least $\min_{B \sse [n]} (\alt(f|_B) + \alt(f|_{\overline{B}}))$
\end{proof}

\section{Affine Transforms : Lower Bounds  on Quantum Communication Complexity}

In this section, we study the affine transformation in its full generality applied to block sensitivity and sensitivity,
and use it to prove~\cref{thm:quant:lb} and~\cref{thm:qc-eq:spl} from the introduction. 
We achieve this using  affine transforms as our tool (\cref{bsens:ub:affine}), by which 
we derive a new lower bound for $Q^*_{1/3}(F)$ in terms of $\bsens(f,0^n)$ (\cref{sec:ub:quantum:bs}). 
Using this and a lower bound on $\bsens(f,0^n)$ (\cref{bound:dt:bs:degtwo}), we show that for any Boolean function $f$, and any prime $p$, 
$Q^*_{1/3}(F) \ge \Omega\left (\frac{\sqrt{\DT(f)}}{\deg_p(f)}\right )$. This 
immediately implies that if there is a $p$ such that $\deg_p(f)$ is constant, then $
\D(F) \le 2\DT(f) \le O(Q^*_{1/3}(F)^2)$ thereby answering~\cref{qc-equiv} in 
positive for such functions. We relax this requirement and show that if there exists  
distinct primes $p$ and $q$ for which $\deg_p(f)$ and $\deg_q(f)$ are not very close, then $\D(F) \le 
\poly(Q^*_{1/3}(F))$ (\cref{thm:qc-eq:spl}).

\subsection{Upper Bound for Block Sensitivity via Affine Transforms}\label{bsens:ub:affine}
In this section, we describe our main tool.
Given an $f:\zon \to \zo$ and any $a \in \zon$, we exhibit an affine transform $A:
\F_2^n \to \F_2^n$ such that for $g(x) = f(Ax)$, $\bsens(f,a) \le \sens(g,0^n)$.

Before describing the affine transform, we note that a linear transform is already known to achieve a weaker bound of $\bsens(f) \le O(\sens(g)^2)$ due to Sherstov~\cite{S10}.
\begin{proposition}[Lemma 3.3 of~\cite{S10}]
For any $f:\F_2^n \to \zo$, there exists a linear transform $L:\F_2^n \to \F_2^n$ such that for $g(x) = f(Lx)$, $\bsens(f) = O(\sens(g)^2)$. 

\end{proposition}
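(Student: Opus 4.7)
The plan is to produce $L$ by reading off the combinatorial structure of a block-sensitivity-achieving input. Let $z \in \F_2^n$ satisfy $\bsens(f, z) = k = \bsens(f)$, and fix disjoint sensitive blocks $S_1, \ldots, S_k$. For each $i$, partition $S_i$ into $A_i = \set{j \in S_i : z_j = 0}$ and $B_i = \set{j \in S_i : z_j = 1}$, and let $I \sse [k]$ index the blocks where both halves are non-empty. Define $L$ on basis vectors by $L(e_j) = e_j$ for $j$ outside $\bigcup_i S_i$; by $L(e_{\min A_i}) = e_{A_i}$ and $L(e_{\min B_i}) = e_{B_i}$ whenever these half-blocks are non-empty; and by $L(e_j) = 0$ on all remaining coordinates.

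A short linear computation shows $Lz = z$, so $g(z) = f(z)$. For any coordinate $\ell$, the value $g(z \oplus e_\ell)$ equals $f(z \oplus v_\ell)$ where $v_\ell \in \set{0, e_\ell, e_{A_i}, e_{B_i}, e_{S_i}}$ depending on the role of $\ell$. Thus $\sens(g, z)$ exactly counts the single-coordinate perturbations of $g$ that, via $L$, flip a half-block (or whole block, for pure $i$) of $f$ at $z$.

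I would split the bound according to whether $|I|$ exceeds $k/2$. If at least $k/2$ blocks are pure, then flipping $e_{\min S_i}$ of $g$ at $z$ flips the entire sensitive block $S_i$ of $f$, so $\sens(g, z) \ge k/2 = \Omega(\sqrt{k})$ and we are done. Otherwise $|I| \ge k/2$, and for each $i \in I$ I examine the restriction of $f$ to the two-dimensional subcube $\set{z, z \oplus e_{A_i}, z \oplus e_{B_i}, z \oplus e_{S_i}}$. Since the diagonal flips the value, at least one of the three edges incident to $z$ in this subcube is sensitive; the corresponding coordinate of $g$ (either $\min A_i$ or $\min B_i$) is therefore sensitive for $g$ at $z$ itself, or at $z \oplus e_{\min A_i}$.

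The main obstacle is consolidating these per-block sensitivities---which may live at different inputs of $g$---into a single input witnessing $\Omega(\sqrt{k})$ simultaneous sensitive coordinates. A natural approach is an averaging argument: perturb $z$ by $\bigoplus_{i \in T} e_{\min A_i}$ over uniformly random $T \sse I$, and show that the expected number of $\min B_i$ directions sensitive at the perturbed input is $\Omega(|I|)$, so some specific $T$ achieves $\Omega(\sqrt{|I|})$ sensitive coordinates. Equivalently, a direct pigeonhole on the two-dimensional cubes indexed by $I$ suffices. This consolidation is exactly where the $\sqrt{\cdot}$ loss arises, and it is the only step in the argument that is not essentially syntactic.
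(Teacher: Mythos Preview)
Your construction of $L$ matches the paper exactly; indeed, the paper only \emph{describes} this transform (attributing it to Sherstov) and then asserts that $\sens(g,z)=\Omega(\sqrt{\bsens(f)})$ without reproducing Sherstov's argument. So on the level of the construction you are aligned with the paper, and your case split on $|I|$ versus $k/2$ together with the per-block $2$-cube analysis is a reasonable start on the part the paper omits.

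The genuine gap is in the consolidation step, and your two suggested fixes do not work as stated. The averaging argument over random $T\subseteq I$ fails for the following reason: once you move to $z_T = z \oplus \bigoplus_{i\in T} e_{\min A_i}$, the corresponding input of $f$ is $z \oplus \bigoplus_{i\in T} e_{A_i}$, and for $|T|>1$ you have \emph{no} information about $f$ at this point or its neighbors. Your per-block $2$-cube analysis only controls $f$ on the four points $\{z,\,z\oplus e_{A_i},\,z\oplus e_{B_i},\,z\oplus e_{S_i}\}$ separately for each $i$; it says nothing about $f$ at mixtures across different blocks. So the claim that the expected number of sensitive $\min B_j$ directions at $z_T$ is $\Omega(|I|)$ has no support. (Also, a minor slip: $z$ has two edges in each $2$-cube, not three.) The vague ``pigeonhole on the $2$-cubes'' runs into the same obstacle: the cubes share only the root $z$, and sensitivities discovered away from $z$ live at pairwise distinct neighbors, so pigeonhole over $I$ alone does not force many of them to coincide at a single input.

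What is actually needed is an argument bounding the $2$-block sensitivity of $g$ at $z$ in terms of $\sens(g)$ globally (not just $\sens(g,z)$); this is where the quadratic loss legitimately enters, but it requires more than a one-line averaging or pigeonhole. As written, your proof stops precisely at the nontrivial step.
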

See~\cref{lt:sherstov} in~\cref{app:sherstov} for an explicit description of the linear transform achieving the bounds in the above proposition.

Now we describe an affine transform which improves the bound on $\bsens(f)$ in the above proposition to linear in $\sens(g)$. This affine transform has already been used in  Nisan and Szegedy (see Lemma 7 of~\cite{NS94}) 
to show that $\bsens(f) \le 2\deg(f)^2$. Since the exact form of $g$ is relevant in the 
subsequent arguments, we explicitly prove it here bringing out the structure of the affine transform that we require.

\begin{replemma}{lt:bs-s}
For any $f:\F_2^n \to \pmo$ and $a \in \zon$, there exists an affine transform $A:
\F_2^n 	\to \F_2^n$	such that for $g(x) = f(A(x))$, 
\begin{enumerate}[(a)]
\item $\bsens(f,a)\le \sens(g,0^n)$, and \label{bsens:bound}
\item $g(x) = f((x_{i_1}, x_{i_2}, \ldots, x_{i_n})\oplus a)$ where $i_1, \ldots, 
i_n \in [n]$ are not necessarily distinct. \label{structure:g}
\end{enumerate}
\end{replemma}
\begin{proof}
Let $\bsens(f,a)=k$ and $\set{B_1,\ldots, B_k}$ be the sensitive blocks on $a$. Since the blocks are disjoint,  $\set{B_i \mid i\in [k]}$ viewed as vectors over $\F_2^n$ are linearly independent.  Hence, there is a linear transform $L:\F_2^n \to 
\F_2^n$ such that $L(e_i) = B_i$ for $i \in [k]$.\footnote{For completeness of definition 
of $L$, for $i \not \in [k]$, we define $L(e_i) = 0^n$.} Define $A(x) = L(x)\oplus a$. For $g(x) = f(A(x))$, 
\begin{eqnarray*}
 \sens(g,0^n) &=& |\set{i \mid g(0^n) \ne g(0^n \oplus e_i), i \in [n]}| \\
   &=& |\set{i \mid f(a) \ne f(a \oplus L(e_i) ), i \in [n] } | = \bsens(f,a)  
\end{eqnarray*}
which completes the proof of main statement and~\cref{bsens:bound}. \cref{structure:g} holds as the sensitive blocks are disjoint.
\end{proof}

\subsection{From Block Sensitivity Lower Bound at $0^n$ to Quantum Communication Lower Bounds}\label{sec:ub:quantum:bs}

We now prove a lower bound for $Q^*_{1/3}(F)$ in terms of $\bsens(f,0^n)$.
\begin{reptheorem}{ub:quantum:bs}
Let $f:\zon \to \pmo$ and $F(x,y) = f(x \land y)$, then,
$$Q^*_{1/3}(F) = \Omega\left(\sqrt{\bsens(f,0^n)}\right).$$
\end{reptheorem}
\begin{proof}
We first state a weaker version of this result which follows from
Theorem 4.2 of Sherstov~\cite{S10}. The result, which is based on a powerful method of proving quantum communication lower bounds due to Razborov~\cite{R03} and Klauck~\cite{K07}, says that
for a Boolean function $g:\zon \to \pmo$ with $G(x,y) = g(x \land y)$,   if there 
exists an $z \in \zon$ such that $z_i = 0$ for all $i \in [k]$ and $g(z \oplus e_1) = 
g(z \oplus e_2) = \ldots = g(z \oplus e_k) \ne g(z)$, then $Q^*_{1/3}(G) = 
\Omega(\sqrt{k})$. 
This immediately implies that for any $g:\zon \to \pmo$, 
\begin{equation} \label{lb:quantum}
Q^*_{1/3}(G) = \Omega\left(\sqrt{\sens(g,0^n)}\right)
\end{equation}
Given an $f$, we now describe a $g:\zon \to \pmo$ such that 
$Q^*_{1/3}(F) \ge Q^*_{1/3}(G)$ and $Q^*_{1/3}(G) = \Omega(\sqrt{\bsens(f,0^n)})$ as follows thereby completing the proof.

Applying~\cref{lt:bs-s} with $a=0^n$ to $f$, we obtain $g(x) = f(x_{i_1}, x_{i_2}, \ldots, 
x_{i_n})$. We note that $F$ and $G$ can be viewed as a $2^n \times 2^n$ matrix with $(x,y)$th entry being $f(x \land y)$ and $g(x \land y)$ respectively. By construction of $g$, using the observation that the matrix $G$ appears as a submatrix of $F$, $Q^*_{1/3}(F) \ge Q^*_{1/3}(G)$. 
This observation is used in Sherstov (for instance, see proof of Theorem 5.1 of~\cite{S10}) without giving details. For completeness, we give the details here. Let $S = \set{i_1, \ldots i_n } \sse [n]$ of size $k$. For $j \in S$, let $B_j = \set{t \mid 
i_t=j}$. Hence $g$ depends only on these $k$ input variables of $S$ and all the 
variables with indices in $B_j$ are assigned the variable $x_j$. This implies that 
\begin{equation} \label{eq:gfrel}
g(x) = f(\oplus_{j \in S} x_je_{B_j})
\end{equation}

We now exhibit a submatrix of $F$ containing $G$. Consider the submatrix of $F$ with rows and columns restricted to 
$$W = \set{a_1e_{B_1}  \oplus a_2e_{B_2} \oplus \ldots  a_k e_{B_k} \mid  (a_1,a_2\ldots,a_k) \in \zo^k}.$$ 
For $u,y \in W$, 
\begin{align*}
F(u,y) & = f( u\land y) \\  & = f((u_1e_{B_1} \oplus \ldots \oplus   u_ke_{B_k}) \land (y_1e_{B_1} \oplus \ldots \oplus   y_ke_{B_k}) ) \\
& = f(u_1 \land y_1 e_{B_1} \oplus \ldots \oplus u_k\land y_ke_{B_k}) &&[\text{$B_j$s are disjoint}] \\
& = g(u \land y) && [\text{By~\cref{eq:gfrel}}]
\end{align*}

Applying~\cref{lb:quantum} to the $g$ obtained, we have 
$Q^*_{1/3}(G)  \ge \Omega(\sqrt{\sens(g,0^n)})$. Hence, by~\cref{bsens:bound} of~\cref{lt:bs-s}, as $a=0^n$, we have 
$Q^*_{1/3}(G)  \ge \Omega(\sqrt{\bsens(f,0^n)})$.
\end{proof}

\begin{rem}\label{rem:quantum}
Observe that for an arbitrary $a \in \zon$ for $g(x) = f(x \oplus a)$, the statement $Q^*_{1/3}(G) \le Q^*_{1/3}(F)$ does not hold. Otherwise, we would have $Q^*_{1/3}(F) = \Omega(\sqrt{\bsens(f)})$ for all $f$ which is not true (see the discussion after \cref{ub:quantum:bs} in the Introduction).
\end{rem}

\subsection{Putting Them Together}
We are now ready to prove~\cref{thm:quant:lb} and~\cref{thm:qc-eq:spl}.
A critical component of our proof is the following stronger connection between $\DT(f)$ and $\bsens(f,0^n)$. Buhrman and de Wolf, in their survey~\cite{BW02},  showed the following with the proof attributed to Noam Nisan and Roman Smolensky. 
\begin{lemma}[\cite{BW02}] For any Boolean function $f:\zon \to \zo$,
	$\DT(f) \le \bsens(f) \cdot \deg(f)^2$
\end{lemma}

The same proof can be adapted 
to show the following 
strengthening of their result. 

\begin{proposition}\label{bound:dt:bs:degtwo}
For any $f:\zon \to \zo$, and any prime $p$, $$\DT(f) \le \bsens(f,0^n) \cdot \deg_p(f)^2.$$
\end{proposition}
\begin{proof}
We observe that the arguments of Buhrman and de Wolf (more specifically, Lemma 5, Lemma 6 and Theorem 12 of~\cite{BW02}), can give a stronger upper bound than $\bsens(f)\cdot  \deg(f)^2$, namely $\bsens(f,0^n)\cdot \deg_p(f)^2$. 
This is important in our context since we are able to bound $Q^*_{1/3}(F)$ only by $\bsens(f,0^n)$.

Let $p_f(x) \in \F_p[x_1,\ldots,x_n]$ be an $\F_p$ polynomial representation of $f
$. As $p_f$ is a multilinear, we view monomials as subsets of variables. We define \textit{size} of a monomial as the number of variables in it.  Let $S_f$ be the collection of all monomials of maximal size in $p_f$.  We show that,
\begin{claim}\label{claim}
 For any Boolean function $f$, there is a set of variables of size at most $\bsens(f,0^n) \cdot \deg_p(f)$ 
which has a non-empty intersection with all the monomials in $S_f$. 
\end{claim}
We call this set as a hitting set for $S_f$. We now assume this claim. Hence, querying these 
variables fixes them and results in a function whose $\F_p$-degree is at most $\deg_p(f)-1$. We 
repeat this on the resulting function to obtain the desired decision tree where at 
most $\bsens(f,0^n)\cdot \deg_p(f)^2$ variables gets queried. 

\paragraph{Proof of~\cref{claim}}We now argue the existence of a hitting set, which has a non-empty intersection with all the monomials in $S_f$, of size at most $\bsens(f,0^n) \cdot \deg_p(f)$.

Firstly, observe that every monomial $m$ in $S_f$ must have a non-empty set  $B$ of indices of variables in $m$ such that $f(0^n) \ne f(0^n \oplus e_B)$. To see this, restrict $f$ to indices in the monomial $m$ by setting all variables not in the monomial to $0$. Let $g$ be the resulting function. By construction, $g$ is non-constant as the monomial $m$ appears in the $\F_p$ representation of $g$. Hence there must be some setting of the input to $g$ such that its evaluation differs from that of the all zero input.

We construct a hitting set $H$ as follows: for each monomial $m$ in $S_f$, if no variable in $H$ appear in $m$, add all the variables in it to $H$. Since, each such monomial contains a sensitive block on the input $0^n$, the number of monomials that gets added to $H$ is at most $\bsens(f,0^n)$. Since each monomial is of size at most $\deg_p(f)$, total size of the hitting set is at most $\bsens(f,0^n)\cdot \deg_p(f)$.
\end{proof}

We now give a proof of~\cref{thm:quant:lb} and~\cref{thm:qc-eq:spl}.
\begin{reptheorem}{thm:quant:lb}
Fix a prime $p$. Let $f:\zon \to \pmo$ where $f$ depends on all the inputs. Let $F(x,y) = f(x \land y)$. For any $0 < \epsilon < 1$  such that $\deg_p(f) \le (1-\epsilon)\log n$, we have  $$Q^*_{1/3}(F) = \Omega\left (\frac{n^{\epsilon/2}}{\log n} \right ).$$ 
\end{reptheorem}
\begin{proof}
Applying~\cref{ub:quantum:bs} and \cref{bound:dt:bs:degtwo}, we have
\begin{equation} \label{eq:quantum-lb}
 Q^*_{1/3}(F) \ge \Omega\left (\frac{\sqrt{\DT(f)}}{\deg_p(f)}\right )
\end{equation} 
  As observed in Gopalan \etal~\cite{GLS09}, by a modification to an argument in the proof of Nisan and Szegedy (Theorem 1 of~\cite{NS94}), it can be shown that  $\deg(f) \ge \frac{n}{2^{\deg_p(f)}}$. Since, $\DT(f) \ge \deg(f)$, we have $\DT(f)  \ge \frac{n}{2^{\deg_p(f)}}$. Hence ~\cref{eq:quantum-lb} gives,
\begin{align*}
Q^*_{1/3}(F) & = \Omega \left (\frac{\sqrt{n}}{\deg_p(f)2^{\deg_p(f)/2}} \right )  = \Omega \left (\frac{n^{\epsilon/2}}{(1-\epsilon)\log n} \right ) 
\end{align*}
where the last lower bound follows upon applying the bound on $\deg_p(f)$.
\end{proof}

As a demonstrative example, we show a weaker lower bound on quantum communication complexity with prior entanglement for the generalized inner product function $\GIP_{n,k}(x,y) \defn \oplus_{i=1}^n \bigwedge_{j=1}^k (x_{ij} \land y_{ij})$ when $k= \frac{1}{2} \log n$. We remark that a lower bound of $\Omega(n)$ is known for the inner product function~\cite{RWMA99}.

Note that $\GIP_{n,k}$ can be expressed as $f\circ \land$, where $f(z) \defn \oplus_{i=1}^n \bigwedge_{j=1}^k z_{ij}$, with $\degtwo(f) = k$. Applying~\cref{thm:quant:lb} with $\epsilon = 1/2$ and $p=2$, we have $Q^*_{1/3}(\GIP_{n,\frac{1}{2}\log n}) = \Omega\left (\frac{n^{1/4}}{\log n} \right )$. Though this bound is arguably weak,~\cref{thm:quant:lb} gives a non-trivial lower bound for a all those Boolean functions $f$ with small $\deg_p(f)$ for some prime $p$.

\begin{reptheorem}{thm:qc-eq:spl}
Let $f:\zon \to \pmo$ with $F(x,y) = f(x \land y)$. 
Fix $0 < \epsilon < 1$. If there exists distinct primes $p$, $q$ such that $\deg_q(f) = \Omega( \deg_p(f)^{\frac{2}{1-\epsilon}})$, then $\D(F) = O(Q^*_{1/3}(F)^{2/\epsilon})$.
\end{reptheorem}
\begin{proof}
Applying, \cref{ub:quantum:bs} and \cref{bound:dt:bs:degtwo}, for any prime $t$, $Q^*_{1/3}(F) \ge \Omega\left (\frac{\sqrt{\DT(f)}}{\deg_t(f)}\right )$. By hypothesis, $\deg_p(f) \le O(\deg_q(f)^{\frac{1-\epsilon}{2}}) \le O(\DT(f)^{\frac{1-\epsilon}{2}})$ implying that for $t=p$, $\D(F) \le 2\DT(f) \le O(Q^*_{1/3}(F)^{2/\epsilon})$. 
\end{proof}

\begin{rem}
	For any Boolean function $f$, if there exists a prime $p$ with $\deg_p(f) \le c\log n$ for some $c<1/2$, then by main result of~\cite{GLS09} relating degree of Boolean functions under different field characteristics, for any prime $q\ne p$, $\deg_q(f) = \Omega(\frac{n^{1-2c}}{c\log p \log n}) = \Omega((\log n)^2)$. Hence any such $f$ satisfies the condition that $\deg_q(f) = \Omega( \deg_p(f)^{\frac{2}{1-\epsilon}})$ for some constant $\epsilon$ and by~\cref{thm:qc-eq:spl}, $\D(F) = O(Q^*_{1/3}(F)^{2/\epsilon})$.
\end{rem}

\section{Linear Transforms : Sensitivity versus Sparsity}
\label{sec:sens-sps}
Continuing in the theme of affine transforms, in this section,
we first establish an upper bound on alternation of a function in terms of  sensitivity of the function after 
application of a suitable linear transform.
Using this, we show the existence of a function whose sensitivity is 
asymptotically as large as square root of sparsity (see introduction for a motivation and  discussion).
\begin{replemma}{alt-sens-lt}
	For any $f:\zon \to \zo$, there exists an invertible linear transform $L:\F_2^n 
	\to \F_2^n$
	such that for $g(x) = f(L(x))$, $\alt(f) \le 2\sens(g)+1$.
\end{replemma}
\begin{proof}
	Let $0^n \prec x_1 \prec x_2 \ldots \prec x_n = 1^n$ be a chain
	$\calC$ of maximum alternation in the Boolean hypercube of $f$.
	Since chain $\calC$ has maximum alternation, there must be at least
	$(\alt(f)-1)/2$ many zeros and $(\alt(f)-1)/2$ many ones when the $x_i$s are
	evaluated on $f$.  Note that the set of $n$ distinct inputs $x_1,x_2,\ldots,
	x_n$ seen as vectors in $\F_2^n$ are linearly independent and hence
	is a basis of $\F_2^n$. Hence there exists an
	invertible\footnote{$L$ is actually the change of basis transform from
	standard basis vectors to $x_i$s and hence is bijective.} linear transform
	$L:\F_2^n \to \F_2^n$ taking standard basis vectors to the these
	vectors, i,e. $L(e_i) = x_i$ for $i \in [n]$. 

	To prove the result, we now show that $\sens(g,0^n) \ge
	\frac{\alt(f)-1}{2}$. The neighbors of $0^n$ in the hypercube of $g$
	are $\set{e_i \mid i \in [n]}$ and each of them evaluates to $g(e_i) =
	f(L(e_i)) = f(x_i)$ for $i \in [n]$.  Since there are at least
	$(\alt(f)-1)/2$ many zero and at least those many ones among $x_i$s when
	evaluated by $f$, there must be at least $(\alt(f)-1)/2$ many neighbors of
	$0^n$ which differ in evaluation with $g(0^n)$ (independent of the value of
	$g(0^n)$). Hence $\sens(g) \ge s(g,0^n) \ge
	\frac{\alt(f)-1}{2}$ which completes the proof.
\end{proof}
We now describe the family of functions and argue an exponential gap between sensitivity and logarithm of sparsity, as stated in the following 
Theorem.
\begin{reptheorem}{gap:salt:degtwo}
There exists a family of functions $\set{g_k\mid k \in \N}$ such that $$\sens(g_k) 
\ge \frac{\sqrt{\sps(g_k)}}{2}-1.$$
\end{reptheorem}
\begin{proof}
 For the family of functions $f_k \in \calF$ (\cref{def:fun}), $\alt(f_k) \ge 2^{(\log \sps(f_k))/2}-1$~\cite{JDS18}. 

We now use this family $\calF$ to describe the family of functions $g_k$. For every $f_k \in \calF$, let 
$g_k(x) = f_k(L(x))$ such that $\alt(f_k) \le 2\sens(g_k)+1 $ as guaranteed by~
\cref{alt-sens-lt}. Since, we have $\alt(f_k) \ge 2^{(\log \sps(f_k))/2}-1$, it must be that 
$$
\sens(g_k)   \ge \frac{1}{2} (\alt(f_k)-1)  \ge \frac{1}{2} (2^{(\log \sps(f_k))/2}-2) \ge \frac{\sqrt{\sps(f_k)}}{2}-1
$$
As the parameter $\sps$ does not change under invertible linear transforms (Ex 3.1 \cite{AOBF}), $
\sens(g_k) \ge 0.5\sqrt{\sps(f_k)}-1 = 0.5\sqrt{\sps(g_k)}-1$.
\end{proof}
We now describe how the family of Boolean functions in~\cref{gap:salt:degtwo} rule out a possibility of settling XOR Log-Rank conjecture, a conjecture in classical communication complexity, using a recent proof of Sensitivity Conjecture. First, we describe the  XOR Log-Rank conjecture and then give a potential way to prove the XOR Log-Rank conjecture using the recent resolution of Sensitivity Conjecture~\cite{H19}. Following this, we argue how the family of Boolean functions in~\cref{gap:salt:degtwo} rules out this possibility.

For an $f:\zon \to \zo$, define $F_\oplus:\zon \times \zon \to \zo$ as $F_\oplus(x,y) = f(x\oplus y)$. The XOR Log-Rank conjecture says that, for every $f$, the deterministic communication cost of computing the corresponding $F_\oplus$ must satisfy $\D(F_\oplus) = \poly(\log \sps(f))$. An equivalent formulation of the Sensitivity Conjecture due to Hatami \etal (Proposition 5.10, \cite{HKP11}) says that for every $f$, $\D(F_\oplus) = \poly(\sens(f))$. With the Sensitivity conjecture now proven~\cite{H19}, one way to prove the XOR Log-Rank conjecture is to show that for all Boolean functions $f$, $\sens(f) \le \poly(\log \sps(f))$. Unfortunately, the existence of a family of Boolean functions in~\cref{gap:salt:degtwo} rules out this possibility.

\section{Conclusion and Future directions}
In this paper, we study the Boolean function complexity measures, namely sensitivity, block 
sensitivity, and alternation under affine transforms. We showed design of 
special transforms which achieves structurally revealing statements about the 
resulting function. We used their properties to show lower bounds on the bounded 
error quantum communication complexity of Boolean function whose $\F_p$-degree is 
small. We  showed that classical and quantum communication complexity are 
polynomially related for certain special class of functions. We also demonstrated 
Boolean functions where sensitivity of the function is as large as the square root 
of its sparsity.

The main open question is to see if the tools developed here can be pushed to remove the restriction on $\deg_p$ and $\deg_q$ of Boolean functions in~\cref{thm:qc-eq:spl} thereby proving the Quantum Classical equivalence~(\cref{qc-equiv}).
\section{Acknowledgment}
The authors would like to thank the anonymous reviewers for their constructive comments to this paper, specifically for pointing out an error in the earlier version of~\cref{ub:quantum:bs} by giving examples. 
See the \cref{rem:quantum} and the discussion after~\cref{ub:quantum:bs}  of this paper.

\bibliographystyle{alpha}
\bibliography{references}

\begin{appendices}
\section{Quantum communication lower bound from block sensitivity}\label {app:sherstov}
	Sherstov in~\cite{S10} showed the following lower bound on quantum communication cost of an affine shift of a Boolean function in terms of its block sensitivity.
	
	\begin{corollary}[Corollary 4.5 of~\cite{S10}]
		Let $f:\zon \to \pmo$ be given. Then for some $z \in \zon$, the matrix $F'= 
		[f_z(x \land y)]_{x,y} = [f(\ldots, (x_i \land y_i) \oplus z_i, \ldots)]_{x,y}$ 
		obeys$$ Q^*_{1/3}(F') = \Omega(\sqrt{\bsens(f)})$$
	\end{corollary}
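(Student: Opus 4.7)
My plan is to deduce this corollary as a direct consequence of \cref{ub:quantum:bs}, exploiting the fact that pointwise block sensitivity is invariant under XOR-translations of the input. Since $\bsens(f) = \max_{x \in \zon} \bsens(f, x)$, fix $w \in \zon$ attaining this maximum and set $z = w$. Define $g(x) \defn f(x \oplus w) = f_z(x)$. Then
\[
F' \;=\; \bigl[f_z(x \land y)\bigr]_{x,y} \;=\; \bigl[g(x \land y)\bigr]_{x,y}
\]
is precisely the communication matrix of $G(x, y) \defn g(x \land y)$, which has exactly the form to which \cref{ub:quantum:bs} applies.

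The essential observation is that $\bsens(g, 0^n) = \bsens(f, w) = \bsens(f)$. This is immediate from unfolding definitions: for any collection of pairwise disjoint blocks $B_1, \ldots, B_k \sse [n]$, one has $g(\mathbf{1}_{B_j}) \ne g(0^n)$ iff $f(w \oplus \mathbf{1}_{B_j}) \ne f(w)$, so exactly the same family of blocks witnesses block sensitivity at $0^n$ for $g$ as at $w$ for $f$. Hence the maximal witness for $\bsens(f, w)$ is a valid witness for $\bsens(g, 0^n)$, and vice versa.

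Applying \cref{ub:quantum:bs} to $g$ therefore yields
\[
Q^*_{1/3}(F') \;=\; Q^*_{1/3}(G) \;=\; \Omega\!\left(\sqrt{\bsens(g, 0^n)}\right) \;=\; \Omega\!\left(\sqrt{\bsens(f)}\right),
\]
which is the claimed bound.

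In this approach, all the nontrivial content is packaged inside \cref{ub:quantum:bs} (whose proof builds on \cref{lt:bs-s} and ideas from Sherstov~\cite{S10}); the role of the shift $z = w$ is purely to move the block-sensitivity witness of $f$ to the origin, where \cref{ub:quantum:bs} can be invoked. This choice of $z$ is essential and cannot be replaced by $z = 0^n$ uniformly over all $f$: the example $f = \mathsf{AND}_n$, for which $\bsens(f) = n$ while $Q^*_{1/3}\bigl(f(x \land y)\bigr) = O(1)$, already rules out any such universal statement. The only ``obstacle'' in this plan is that it piggybacks on \cref{ub:quantum:bs}; Sherstov's original proof of the corollary, which predates the present paper, instead proceeds by directly exhibiting a reduction from disjointness on the $\bsens(f)$ sensitive blocks at $w$ and applying his pattern matrix lower bound on $Q^*_{1/3}$.
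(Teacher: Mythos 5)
Your proposal is correct and follows essentially the same route the paper sketches for this corollary: choose $z$ to be the input maximizing $\bsens(f,\cdot)$ so that the shifted function $f_z$ has its block-sensitivity witness at $0^n$, and then invoke the origin-based machinery (in your case \cref{ub:quantum:bs}, which packages the affine-transform step of \cref{lt:bs-s}, the submatrix argument, and Sherstov's Theorem~4.2). The only cosmetic difference is that the paper's appendix unwinds this through Sherstov's own Lemma~3.3 rather than through \cref{ub:quantum:bs}, but after the shift the two transforms coincide, so the arguments are the same.
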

	
	In this section, we elaborate on why one
	
	cannot set $z=0^n$ for all Boolean functions and obtain~\cref{ub:quantum:bs}. The 
	above corollary crucially uses two results. The first one is Lemma 3.3 of~
	\cite{S10} which shows that there exists a Boolean function $g:\zon \to \zo$ such 
	that $\bsens(f) \le O(\sens(g)^2)$  which is similar in spirit to~\cref{lt:bs-s}. 
	The second one is Theorem 4.2 of~\cite{S10} which shows a lower bound for 
	$Q^*_{1/3}(G)$ in terms of sensitivity of $g$ (where $G(x,y) = g(x \land y)$). We 
	reproduce the respective statements of both below.

	\begin{lemma}[Lemma 3.3 of~\cite{S10}]
		Let $f:\zon \to \pmo$. Then there exists a $g:\zon \to \pmo$ such that $s(g) = 
		\Omega( \sqrt{\bsens(f)})$  and $g(x) = f(x_{i_1},\ldots, x_{i_n})$ for some 
		$i_1,\ldots,i_n \in [n]$
	\end{lemma}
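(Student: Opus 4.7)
The plan is to reverse-engineer the projection $g$ from a block-sensitivity witness of $f$. Fix $a \in \zon$ and disjoint blocks $B_1, \ldots, B_k$ with $k = \bsens(f)$ and $f(a) \ne f(a \oplus \mathbf{1}_{B_s})$ for every $s$. The restricted form $g(x) = f(x_{i_1}, \ldots, x_{i_n})$ gives $g(\mathbf{1}_T) = f(y)$ with $y_j = \mathbf{1}[i_j \in T]$, so a sensitive coordinate $s$ of $g$ at input $\mathbf{1}_T$ corresponds to a sensitive block $B_s^\ast := \{j : i_j = s\}$ of $f$ at $y$, with the additional constraint that $y$ must be \emph{constant} on $B_s^\ast$ (taking the value $\mathbf{1}[s \in T]$). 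The task therefore reduces to exhibiting an input $y^\ast \in \zon$ together with $\Omega(\sqrt{k})$ pairwise disjoint blocks that are sensitive for $f$ at $y^\ast$ and on each of which $y^\ast$ is constant.

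To produce constant-on-block sensitivity witnesses from the original witness, I would split each $B_s$ along $a$ into $B_s^0 := \{j \in B_s : a_j = 0\}$ and $B_s^1 := \{j \in B_s : a_j = 1\}$, so that $a$ is already constant on each half. A telescoping argument on the flip of the full block $B_s$ shows that for every $s$ one of two cases holds: either $f(a \oplus \mathbf{1}_{B_s^0}) \ne f(a)$, exposing $B_s^0$ as a sensitive block at the input $a$ (with $a$ constantly $0$ on it), or $f(a \oplus \mathbf{1}_{B_s^0}) = f(a)$ and hence $B_s^1$ is sensitive at the nearby input $a \oplus \mathbf{1}_{B_s^0}$ (with that input constantly $1$ on $B_s^1$). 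A first-level pigeonhole on the four combinations (which half, which constant value) reserves $\Omega(k)$ blocks of a single uniform type.

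The main obstacle, and the source of the $\sqrt{k}$ loss relative to~\cref{lt:bs-s}, is the gluing step: the per-block reference inputs may disagree because of the local corrections $\mathbf{1}_{B_s^0}$. These corrections live inside disjoint blocks, so they commute as modifications of $a$, but combining them can still destroy sensitivity on other blocks (since flipping $B_s^0$ may or may not preserve $f(a \oplus \mathbf{1}_{B_{s'}}) \ne f(a)$ for $s' \ne s$). The natural resolution is a random-subset or greedy selection of $\Omega(\sqrt{k})$ blocks whose accumulated correction yields a single reference input $y^\ast$ on which every selected block remains sensitive; the square-root emerges from a standard Cauchy--Schwarz / second-moment analysis over random subsets of blocks. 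Once $y^\ast$ and $\Omega(\sqrt{k})$ such blocks $B_s^\ast$ are in hand, the projection is assembled by setting $i_j = s$ for $j \in B_s^\ast$ and routing each remaining coordinate to one of two parking indices carrying the values $0$ or $1$ at the target input $\mathbf{1}_{T^\ast}$ (where $s \in T^\ast$ iff $y^\ast$ is constantly $1$ on $B_s^\ast$). This would yield $\sens(g) \ge \Omega(\sqrt{\bsens(f)})$, and cleanly reflects why forgoing the XOR-shift freedom that powers \cref{lt:bs-s} costs a square root in the final bound.
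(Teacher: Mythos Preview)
Your splitting of each sensitive block along the value of the reference input into a $0$-half and a $1$-half is exactly the decomposition Sherstov uses (the paper calls these $A_i$ and $B_i$), and your telescoping observation is correct. The divergence, and the gap, is at the gluing step.

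You commit to a \emph{single} representative per block---the half that you have certified sensitive---and then need to manufacture one common reference input $y^\ast$ at which $\Omega(\sqrt{k})$ of these halves are simultaneously sensitive. Your appeal to a ``random-subset / Cauchy--Schwarz / second-moment'' argument is not a proof: you have no control over how the value of $f$ at $a \oplus \bigoplus_{s\in T'} \mathbf 1_{B_s^0}$ interacts with further block flips, so there is no evident product structure or independence on which a second-moment computation could bite. Nothing you have written forces the existence of a large $T$ with the required simultaneous sensitivity, and I do not see how to complete the argument along these lines without essentially reproving a Kenyon--Kutin style bound.

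Sherstov's construction sidesteps the gluing problem entirely by assigning \emph{two} representatives to each impure block---one for $A_i$ and one for $B_i$---rather than one. With that choice, the resulting $g$ has, at the single input $z$, $k$ disjoint sensitive blocks each of size at most $2$ (flipping both representatives of block $i$ reproduces the flip of the full $S_i$). The square root then comes not from any averaging over inputs but from the combinatorial inequality $\bsens_2(g)\le O(\sens(g)^2)$: once $\bsens_2(g,z)\ge k$ one reads off $\sens(g)=\Omega(\sqrt{k})$. In short, the $\sqrt{k}$ is paid in the passage from $2$-block sensitivity to sensitivity, not in a probabilistic selection of a good reference input; your plan puts the loss in the wrong place and leaves it unproved.
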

	
	The function $g$ is defined as follows.
	
	Let $z$ be the input on which $\bsens(f,z)$ is maximum and $f(z)=0$. Let $S_1,\ldots, S_k \sse [n]$ be the sensitive blocks on $z$. Define $A_i = \set{j \in S_i \mid z_j =0}$ and $B_i =  \set{j \in S_i \mid z_j =1}$. Let $I$ be the indices $i \in [k]$ such that both $A_i$ and $B_i$ are both non-empty.
	
	Then 
	$$ g(x) = f\left ( \bigoplus_{i \in I} x_{min A_i} e_{A_i} \oplus \bigoplus_{i \in I} x_{min B_i} e_{B_i}  \oplus \bigoplus_{i \in [k]\setminus I} x_{min S_i} e_{S_i} \oplus  \bigoplus_{i \not \in S_1 \cup \ldots \cup S_k }  x_ie_i \right )$$

	\begin{observation}\label{lt:sherstov}
		We observe that the above result of Sherstov (Lemma 3.3 of~\cite{S10}) can be seen as applying a suitable linear transform to the Boolean function $f$ to bound the block sensitivity of $f$ which is similar in spirit to~\cref{lt:bs-s}.
		
		More precisely, the $g$ obtained in Lemma 3.3 of~\cite{S10} can be described as $f(L(x))$ where $L$ is defined as, for $j \in [n]$,
		\[ L(e_j) = \begin{cases}
		e_j \text{ if } j \not \in  S_1 \cup \ldots \cup S_k \\
		e_{A_i} \text{ if } \exists i\in [k], \text{ such that }  j = \min \{ A_i \} \\
		e_{B_i} \text{ if } \exists i\in [k], \text{ such that }  j = \min \{ B_i \} \\
		0^n \text{ otherwise }
		\end{cases}
		\]
		
	\end{observation}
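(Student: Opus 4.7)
The plan is to verify the claimed identity by writing out $L(x)$ using $\F_2$-linearity and checking that, term by term, the result matches the argument of $f$ in the definition of $g$ from Lemma 3.3 of \cite{S10}.

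Since $L$ is defined on the basis $\{e_1, \ldots, e_n\}$ and extended $\F_2$-linearly, I would first write
\[
L(x) \;=\; \bigoplus_{j \in [n]} x_j\, L(e_j),
\]
and then partition the index set $[n]$ according to which case of the definition of $L$ applies to $j$. The partition is: (i) $j \notin S_1 \cup \cdots \cup S_k$, (ii) $j = \min A_i$ for some $i \in [k]$, (iii) $j = \min B_i$ for some $i \in [k]$, and (iv) $j$ lies in some $S_i$ but is neither $\min A_i$ nor $\min B_i$. Observing that the $S_i$ are disjoint (they are the sensitive blocks on $z$) and that $S_i = A_i \cup B_i$ with $A_i \cap B_i = \emptyset$, this is a genuine partition, so the contributions to $L(x)$ from each class can be added independently.

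Next I would match each class to a term in the expression defining $g$. Class (i) contributes $\bigoplus_{j \notin S_1 \cup \cdots \cup S_k} x_j e_j$, which is the last term in $g$. For $i \in I$, both $A_i$ and $B_i$ are nonempty, so $\min A_i$ and $\min B_i$ exist, and classes (ii) and (iii) contribute $x_{\min A_i} e_{A_i}$ and $x_{\min B_i} e_{B_i}$ respectively, matching the first two terms in $g$. For $i \in [k] \setminus I$, exactly one of $A_i, B_i$ is empty; say $A_i = \emptyset$, so $B_i = S_i$, giving $\min B_i = \min S_i$ and $e_{B_i} = e_{S_i}$, and the contribution from (iii) becomes $x_{\min S_i} e_{S_i}$; the symmetric case $B_i = \emptyset$ gives the same expression via (ii). Thus the contributions for $i \in [k]\setminus I$ collectively yield $\bigoplus_{i \in [k] \setminus I} x_{\min S_i} e_{S_i}$, the third term in $g$. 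Finally, class (iv) contributes $0^n$ on each index and drops out.

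Combining the four contributions, $L(x)$ equals exactly the argument of $f$ in the defining formula for $g$, so $g(x) = f(L(x))$. The only mild obstacle is bookkeeping: one must check that the cases in the definition of $L$ are mutually exclusive (which follows from the disjointness of the $A_i$'s and $B_i$'s across different $i$'s, using that the $S_i$ are disjoint), and that for $i \notin I$ the identification $\min S_i = \min A_i$ or $\min B_i$ is consistent with which of the two sets is empty. No quantitative estimate is required; the observation is purely a rewriting of Sherstov's construction in the language of applying a linear map to the input, paralleling \cref{lt:bs-s}.
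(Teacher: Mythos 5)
Your verification is correct and matches the paper's (implicit) reasoning: the paper states this observation without proof, treating it as a direct rewriting of Sherstov's construction, and your linearity expansion $L(x) = \bigoplus_j x_j L(e_j)$ together with the four-way partition of $[n]$ is exactly the bookkeeping needed to justify it. The key points you flag --- disjointness of the $S_i$ making the cases mutually exclusive, and the identification $e_{B_i} = e_{S_i}$ (or $e_{A_i} = e_{S_i}$) when $i \notin I$ --- are precisely the details that make the rewriting go through.
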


	By definition $g$ as above, Sherstov showed that $\sens(g,z) = \Omega(\sqrt{\bsens(f)})$. 
	
	\begin{theorem}[Theorem 4.2 of~\cite{S10}]
		For a Boolean function $g:\zon \to \pmo$ with $G(x,y) = g(x \land y)$,   if there 
		exists an $w \in \zon$ such that $w_i = 0$ for $i \in [k]$ and $g(w \oplus e_1) = 
		g(w \oplus e_2) = \ldots = g(w \oplus e_k) \ne g(w)$, then $Q^*_{1/3}(G) = 
		\Omega(\sqrt{k})$.
	\end{theorem}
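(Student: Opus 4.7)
The plan is to exhibit a local embedding that reduces $k$-bit Unique Disjointness (UDISJ$_k$) to computing $G$, and then invoke Razborov's quantum lower bound $Q^*_{1/3}(\text{UDISJ}_k) = \Omega(\sqrt{k})$. Since the reduction consumes no extra communication, this will immediately yield $Q^*_{1/3}(G) = \Omega(\sqrt{k})$. Without loss of generality assume the distinguished $k$ sensitive coordinates in the hypothesis are $\set{1, 2, \ldots, k}$.

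Let $b = g(w) \in \pmo$, so by hypothesis $g(w \oplus e_i) = \bar{b}$ for every $i \in [k]$ and $w_i = 0$ on these coordinates. Given a UDISJ$_k$ instance $(x, y) \in \zo^k \times \zo^k$ (under the promise $|x \cap y| \in \set{0, 1}$), Alice builds $x' \in \zon$ by setting $x'_i = x_i$ for $i \in [k]$ and $x'_i = w_i$ otherwise; Bob builds $y' \in \zon$ by setting $y'_i = y_i$ for $i \in [k]$ and $y'_i = 1$ otherwise. Both parties can perform this construction locally, since the function $g$ and the associated vector $w$ are known to both.

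A bit-by-bit check shows $(x' \land y')_i = x_i \land y_i$ for $i \in [k]$ and $(x' \land y')_i = w_i$ for $i \notin [k]$. Under the UDISJ promise there are exactly two cases: if $x \cap y = \emptyset$, then $x' \land y' = w$ and $G(x', y') = g(w) = b$; if $x \cap y = \set{j}$ for the unique $j \in [k]$, then $x' \land y' = w \oplus e_j$ and $G(x', y') = g(w \oplus e_j) = \bar{b}$. Hence any bounded-error quantum protocol for $G$ decides UDISJ$_k$ on $(x, y)$ with the same communication cost, which combined with the Razborov bound completes the proof.

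There is no serious technical obstacle; the main conceptual point to highlight is that the hypothesis $w_i = 0$ on the $k$ sensitive coordinates is exactly what makes the AND-gadget work, because Alice and Bob can flip a bit $0 \to 1$ using $\land$ (by both setting it to $1$) but cannot flip a bit $1 \to 0$. The uniqueness promise of UDISJ is also essential: it guarantees $x' \land y'$ is either $w$ or of the form $w \oplus e_j$ for a single $j$, rather than $w \oplus \sum_{j \in S} e_j$ with $|S| \ge 2$ on which the value of $g$ is not constrained by the hypothesis.
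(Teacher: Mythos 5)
Your reduction is correct: the embedding $x\mapsto x'$, $y\mapsto y'$ places the unique-disjointness matrix on $k$ bits as a submatrix of $G$ (the check that $x'\land y'$ equals $w$ or $w\oplus e_j$ uses exactly the hypothesis $w_i=0$ for $i\in[k]$, as you note), and Razborov's bound $Q^*_{1/3}(\mathrm{UDISJ}_k)=\Omega(\sqrt{k})$ then gives the claim. The paper itself offers no proof of this statement --- it is quoted verbatim as Theorem 4.2 of Sherstov and attributed to the Razborov--Klauck method --- and your argument is precisely the standard reduction underlying that theorem, so there is nothing to flag.
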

	
	To use the above result, one way is to start with a function $g$ for which sensitivity is large at $0^n$. To achieve, consider the shifted function $f_z$ where $z$ is the same input on which block sensitivity is maximized as before. This is because, by the choice of $z$, $f_z$ will have maximum block sensitivity at $0^n$ which upon applying Lemma 3.3 of~\cite{S10} ensures that the function $g$ obtained has a large $k$ (\ie~sensitivity) at $0^n$. This is exactly what is achieved in the proof of Corollary 4.5 of~\cite{S10}.
	
	Hence the choice is $z$ is tied up with the block sensitivity of function $f$.
\end{appendices}

\end{document}